\numberwithin{equation}{section}
\theoremstyle{plain}
\newtheorem{theorem}{Theorem}[section]
\newtheorem{lemma}[theorem]{Lemma}
\newtheorem{proposition}[theorem]{Proposition}
\newtheorem{corollary}[theorem]{Corollary}
\newtheorem{assumption}[theorem]{Assumption}
\theoremstyle{nonumberplain}
\newtheorem{proof}{Proof}
\newcommand{\epsi}{\varepsilon}
\newcommand{\E}{{\mathrm{e}}}
\newcommand{\I}{\mathrm{i}}
\newcommand{\Or}{{\mathcal{O}}}
\newcommand{\R}{{\mathbb{R}}}
\newcommand{\C}{{\mathbb{C}}}
\newcommand{\N}{{\mathbb{N}}}
\newcommand{\D}{{\mathrm{d}}}
\newcommand{\Hi}{ \mathcal{H} }
\newcommand{\B}{{\mathcal{B}}}
\newcommand{\trace}{\mathrm{Tr}}
\providecommand{\tracespin}{\trace_{\C^2}}
\providecommand{\pstrans}{\R^{2d}}
\providecommand{\Op}{\mathrm{Op}}
\providecommand{\Optrans}{\Op_{\pstrans}}
\providecommand{\Opspin}{\Op_{\Stwo}}
\providecommand{\Opsum}{\Op_{\Sigma}}
\providecommand{\projs}{\mathcal{P}}
\newcommand{\Stwo}{{\mathbb{S}^2}}
\providecommand{\Weyl}{\sharp}
\providecommand{\spinW}{\Weyl_{\Stwo}}
\providecommand{\Wigner}{\mathcal{W}}
\providecommand{\Wignersum}{\Wigner_{\Sigma}}
\providecommand{\Wignertrans}{\Wigner_{\pstrans}}
\providecommand{\Wignerspin}{\Wigner_{\Stwo}}
\providecommand{\Cont}{\mathcal{C}}
\providecommand{\Hil}{\mathcal{H}}
\providecommand{\R}{\mathbb{R}}
\providecommand{\C}{\mathbb{C}}
\renewcommand{\C}{\mathbb{C}}
\providecommand{\N}{\mathbb{N}}
\providecommand{\eps}{\varepsilon}
\providecommand{\abs}[1]{\left \lvert #1 \right \rvert}
\providecommand{\sabs}[1]{\lvert #1 \vert}
\providecommand{\babs}[1]{\bigl \lvert #1 \bigr \rvert}
\providecommand{\norm}[1]{\left \lVert #1 \right \rVert}
\providecommand{\snorm}[1]{\lVert #1 \rVert}
\providecommand{\bnorm}[1]{\bigl \lVert #1 \bigr \rVert}
\providecommand{\Bnorm}[1]{\Bigl \lVert #1 \Bigr \rVert}
\providecommand{\sexpval}[1]{\langle #1 \rangle}
\providecommand{\trace}{\mathrm{Tr} \,}
\providecommand{\dd}{\mathrm{d}}
\providecommand{\id}{\mathbf{1}}
\providecommand{\order}{\mathcal{O}}
\providecommand{\e}{\mathrm{e}}
\providecommand{\ii}{\mathrm{i}}
\providecommand{\ie}{i.~e.~}
\providecommand{\eg}{e.~g.~}
\providecommand{\cf}{cf.~}
\newcommand{\rom}{\renewcommand{\labelenumi}{{\rm(\roman{enumi})}}}
\newcommand{\brom}{\begin{enumerate}\rom}
\newcommand{\erom}{\end{enumerate}}
\title{Semiclassics for particles with spin via a Wigner-Weyl-type calculus\footnote{This work was supported by the German-Israeli Foundation under Grant no. 980/2007.}}
\author{Omri Gat${}^{\star}$, Max Lein${}^{\dagger}$ \& Stefan Teufel${}^{\ddagger}$}
\date{\today}
\begin{document}

\maketitle
\thispagestyle{empty}

\vspace{-9mm}
\begin{center}
	$^{\star}$ Hebrew University of Jerusalem, Racah institute of Physics, \linebreak
	Jerusalem 91904, Israel. \linebreak
	{\footnotesize \texttt{omrigat@cc.huji.ac.il}}
	\\[2mm]
	$^{\dagger}$ Kyushu University, Faculty of Mathematics \linebreak
	744 Motooka, Nishi-ku, Fukuoka-shi, Fukuoka-ken, Japan. \linebreak 
	{\footnotesize \texttt{lein@ma.tum.de}}
	\\[2mm]
	$^{\ddagger}$ Eberhard Karls Universit\"at T\"ubingen, Mathematisches Institut \linebreak
	Auf der Morgenstelle 10, 72076 T\"ubingen, Germany. \linebreak
	{\footnotesize \texttt{stefan.teufel@uni-tuebingen.de}}
\end{center}
\begin{abstract}
	We show how to relate the full quantum dynamics of a spin-$\nicefrac{1}{2}$ particle on $\R^d$ to a classical Hamiltonian dynamics on the enlarged phase space $\pstrans \times \Stwo$ up to errors of second order in the semiclassical parameter. This is done via an Egorov-type theorem for normal Wigner-Weyl calculus for $\R^d$ \cite{Lein:quantization_semiclassics:2010,Folland:harmonic_analysis_hase_space:1989} combined with the Stratonovich-Weyl calculus for $SU(2)$ \cite{Varilly_Gracia_Bondia:Moyal_rep_spin:1989}. 
	For a specific class of Hamiltonians, including the Rabi- and Jaynes-Cummings model, we prove an Egorov theorem for times much longer than the semiclassical time scale.
	We illustrate the approach for a simple model of the Stern-Gerlach experiment.
\end{abstract}

\tableofcontents

\addtolength{\topmargin}{1.5cm}
\addtolength{\textheight}{-2cm}

\newpage

\section{Introduction}
We consider the semiclassical limit of a quantum system with spin. The underlying state space is the Hilbert space $\Hil = L^2(\R^d,\C^n)$ of square-integrable functions on configuration space $\R^d$ taking values in $\C^n$. The Hamiltonian $\hat{H}$ generating the time evolution on $\Hi$ is assumed to be the Weyl quantization of a matrix-valued symbol $H : \R^{2d} \to \B(\C^n)$,~\ie
\begin{align*}
	\hat{H} = H(\epsi, x,-\ii \eps \nabla_x) := \Op \bigl ( H(\eps) \bigr ) 
	\, .
\end{align*}
Under appropriate conditions on the function $H$ its Weyl quantization $\hat{H}$ is a self-adjoint operator on $\Hi$ and generates a unitary group $\E^{-\I \hat{H} \frac{t}{\epsi}}$. We will be interested in the semiclassical asymptotics of the evolution of semiclassical observables. Let $A:\R^{2d} \to \B(\C^n)$ be another matrix-valued function on classical phase space, then the time evolution of the semiclassical operator $\hat{A} = \Op(A)$
is given by
\begin{align*}
	\hat{A}(t) := \E^{+ \ii \hat{H} \frac{t}{\eps}} \, \hat{A} \, \E^{-\ii \hat{H} \frac{t}{\eps}} 
\end{align*}
and $\hat{A}(t)$ solves the Heisenberg equation of motion
\begin{align*}
	\frac{\D}{\D t} \hat{A}(t) = \frac{\ii}{\eps} \bigl [ \hat{H} , \hat{A}(t) \bigr ]
	. 
\end{align*}
Instead of solving the full quantum dynamics, one is often interested in simpler, approximate solutions in terms of a classical Hamiltonian flow. In the scalar case $n=1$ this is the standard semiclassical problem. Under appropriate technical conditions the Egorov Theorem states that 
\begin{align}
	\sup_{t\in[0,T]} \norm{\hat{A}(t) - \Op \bigl ( A \circ \Phi^0_t \bigr )} = \Or(\epsi)\,,
	\label{ego1}
\end{align}
where $\Phi^0_t : \R^{2d} \to \R^{2d}$ denotes the classical Hamiltonian flow corresponding to the principal symbol $H_0(q,p)$ of $H(q,p) = H_0(q,p) + \Or(\epsi)$. That is, on bounded time intervals one can approximate the quantum mechanical time evolution of a semiclassical observable by transporting its symbol along a Hamiltonian flow. As a corollary, the Egorov theorem also implies a semiclassical evolution for states: here, the Wigner transform of the time-evolved density operator is compared to the Wigner transform of the density operator at $t = 0$ transported in time using the classical flow. In addition to its physical and practlical relevance for understanding the quantum-time-evolution,
the Egorov theorem is also the basis for a number of further mathematical results connecting properties of quantum and classical systems.
For example, in \cite{Zelditch:1996tj,Bolte_Glaser:semiclassical_limit_matrix-valued_operators:2004}
quantum-ergodicity for classically ergodic systems was proved based on the Egorov theorem.

The result  \eqref{ego1} has been generalized in many directions. For example, when replacing the flow $\Phi^0_t$ by the flow $\Phi^{\eps}_t$ generated by $H_0(q,p) + \epsi H_1(q,p)$, where $H(q,p) = H_0(q,p) + \epsi H_1(q,p) + \Or(\epsi^2)$, then the error is of order $\eps^2$, 
\begin{align}
	\sup_{t\in[0,T]} \norm{\hat{A}(t) - \Op \bigl ( A \circ \Phi^{\eps}_t \bigr )} = \Or(\epsi^2)
	.
	\label{ego2}
\end{align}
It is also well known how to construct higher-order approximations to $\hat{A}(t) $. However, they cannot be written as the composition of $A$ with a flow map on phase space $\R^{2d}$, see \eg \cite[Chaptre~IV~{\S}6]{Robert:tour_semiclassique:1987}. Depending on the details of the classical flow, one can potentially extend this approximation to longer time scales \cite{Bambusi_Graffi_Paul:semiclassics_Ehrenfest:1999,Bouzouina_Robert:uniform_semiclassical_estimates_Ehrenfest:2002}. 
 
In this work we consider the case $n>1$ of matrix valued symbols. Then basically two distinct cases appear: If the principal symbol $H_0$ of $H$ has eigenspaces that are nontrivial functions on $\R^{2d}$, then this is an adiabatic problem. Hence, in a first step the total state space $\Hi = L^2(\R^d, \C^n)$ of the system can be decomposed into orthogonal subspaces $\Hi_j$ that are each unitarily equivalent to spaces of the form $L^2(\R^d, \C^{n_j})$ with $\sum_j n_j=n$, see \cite{LITTLEJOHN:1992vg,Emmrich:1996vv,Nenciu:2004ex,PST:sapt:2002,Stiepan_Teufel:semiclassics_op_valued_symbols:2012}. The operator $\hat{H}$ is then block-diagonal with respect to these subspaces up to errors of order $\epsi^\infty$ and thus the reduced problems on each $L^2(\R^d, \C^{n_j})$ can be analyzed independently. If a subspace $\Hi_j$ is related to an isolated eigenvalue band $E_j$ of the principal symbol $H_0$, \ie $E_j(q,p)$ depends smoothly on $(q,p)$ and is an eigenvalue of $H_0(q,p)$ with constant multiplicity, then the block of $\hat{H}$ on that subspace is unitarily equivalent to a semiclassical Hamiltonian $\hat{H}_j$ on $L^2(\R^d, \C^{n_j})$ with principal symbol $E_j \, {\bf 1}_{\C^{n_j}}$ where ${\bf 1}_{\C^{n_j}}= {\rm diag(1,\cdots,1)}$. 

These semiclassical Hamiltonians $H_j \simeq E_j \, \mathbf{1}_{\C^{n_j}} + \order(\eps)$ are an example for   the type of Hamiltonian we will study in this paper: here, the principal symbol $H_0$ is a scalar multiple of the identity matrix. Since higher order terms don't contribute on the time scales we are interested in, we assume for the following that the symbol of the Hamiltonian has the form
\begin{align*}
	H(q,p) = H_0(q,p) + \eps \, H_1(q,p)
	= h_0(q,p) \, {\bf 1}_{\C^n} + \eps \, H_1(q,p)
	\, . 
\end{align*}
Also the semiclassical limit for this problem has been studied extensively in the literature, see \eg \cite{Bolte_Glaser:semiclassical_limit_matrix-valued_operators:2004} and references therein. However, the fact that $A$ is matrix-valued and that the quantum dynamics is non-trivial on the spin degrees of freedom makes it impossible to approximate $\hat{A}(t)$ in terms of a classical flow only on $\R^{2d}$ as in (\ref{ego1}) or (\ref{ego2}). Instead it is shown in \cite{Bolte_Glaser:semiclassical_limit_matrix-valued_operators:2004} based on \cite{Bolte_Keppeler:semiclassics_Dirac:1999} that the symbol of $\hat{A}(t)$ can be approximated, to leading order, by
\begin{equation}\label{eqn:skew_product_flow}
	A(t,q,p) = D^*(t,q,p) \, \bigl ( A \circ \Phi^0_t \bigr )(q,p)\, D(t,q,p)
	,
\end{equation}
where $\Phi^0_t$ is the Hamiltonian flow of the scalar principal symbol $h_0(q,p)$ of $H$. The orthogonal matrices $D(t,q,p)$ are generated by the subprincipal symbol $H_1$ according~to
\begin{align*}
	\tfrac{\D}{\D t} D(t,q,p) = - \I \, \bigl ( H_1\circ \Phi^0_t \bigr )(q,p) \, D(t,q,p)\,,\quad D(0,q,p) = {\bf 1}_{\C^n} 
	. 
\end{align*}
From the point of view of our results this description has two shortcomings: it gives only a leading order description, \ie an error term as in (\ref{ego1}) and the time-evolved $A(t,q,p)$ is not given in terms of a flow on phase space. 
 
The latter shortcoming was solved in \cite{Bolte_Glaser_Keppeler:ergodicity_spinning_particle:2001}, where the authors observed that one can use a Weyl calculus for spin developed by Stratonovich \cite{Stratonovich:distributions_rep_space:1957} and further elaborated upon by Gracia-Bondìa and Vàrilly \cite{Varilly_Gracia_Bondia:Moyal_rep_spin:1989} to map matrix-valued functions $A$ on $\R^{2d}$ to scalar functions $a := {\rm Symb}_{\Stwo}(A)$ on the \emph{extended phase space} $\Sigma := \R^{2d}\times \Stwo$. Note that the two-sphere $\Stwo$ carries a natural symplectic form, namely the volume form. The result is that $A(t,q,p)$ as in (\ref{eqn:skew_product_flow}) can indeed be written in terms of a so-called skew-product flow $\Phi^{\rm skew}_t$ on $\R^{2d}\times \Stwo$,
\begin{align*}
	{\rm Symb}_{\Stwo} \bigl ( A(t,q,p) \bigr ) =: a(t,q,p,n) = \bigl ( a \circ \Phi^{\rm skew}_t \bigr ) (q,p,n)
	\, .
\end{align*}
Note that the skew-product flow $\Phi^{\rm skew}_t$ is not a Hamiltonian flow and that the initial ``spin'' $n$ has no effect on the dynamics of the translational degrees of freedom, \ie with $\Phi^{\rm skew}_t =: \bigl ( Q^{\rm skew}_t , P^{\rm skew}_t , N^{\rm skew}_t \bigr)$ it holds that $Q^{\rm skew}_t(q,p,n) = Q^{\rm skew}_t(q,p)$ and $P^{\rm skew}_t(q,p,n) = P^{\rm skew}_t(q,p)$ are both independent of $n$. This is somewhat unsatisfactory from a physical point of view, since the paradigmatic experiment measuring the spin of a particle, the Stern-Gerlach experiment, is based on the fact that the trajectory of a particle with spin in an inhomogeneous magnetic field depends on its spin orientation. However, this is a small effect not seen in the leading-order approximation.

The main new result of the present paper is the construction of a \emph{Hamiltonian} flow $\Phi^{\eps}_t$ on the extended phase space $\Sigma = \R^{2d} \times \Stwo$ that \emph{includes the influence of the spin on the translational degrees of freedom} and provides a better approximation to the evolution of scalar observables $A= a \, {\bf 1}_{\C^n}$,
\begin{align}
	\sup_{t\in[0,T]} \norm{\hat{A}(t) - \Opsum \bigl ( a \circ \Phi^{\eps}_t \bigr )} = \Or(\epsi^2) 
	. 
	\label{intro:eqn:O_1_time_Egorov}
\end{align}
Here $\Opsum$ now denotes the quantization map from functions on $\Sigma$ to operators on $L^2(\R^{2d},\C^n)$ and $a(q,p,n) := a(q,p)$ (\cf Section~\ref{Weyl_calculus} for details).
 
For a special class of Hamiltonians on $L^2(\R,\C^2)$, including the Rabi- and the Jaynes-Cummings models, we show that the above approximation remains valid for longer times: Assume $h_0(q,p) = \frac{1}{2} (p^2 + \omega^2 q^2)$ is the one-dimensional harmonic oscillator Hamiltonian and suppose that the entries of $H_1(q,p)$ polynomials in $q$ and $p$ of degree at most one. Then we can prove \eqref{intro:eqn:O_1_time_Egorov} for longer times at the expense of a larger error. More precisely, we arrive at the estimate 
\begin{align}
	\sup_{t\in[0,T/\epsi^\gamma ]} \norm{\hat{A}(t) - \Opsum \bigl ( a \circ \Phi^{\eps}_t \bigr )} = \Or(\epsi^{\nicefrac{3}{2}-3\gamma})
	\label{ego3}
\end{align}
for any $0 \leq \gamma < \nicefrac{1}{2}$. Note that on this time scale the influence of spin on the position of the particle can be of order $\epsi^{1-2\gamma}$, so almost of order $1$. Moreover, even for general initial observables $\hat{B} = \Opsum(b)$ with $b=b(q,p,n)$ depending also on spin, we can go to slightly longer times and show that
\begin{align}
	\sup_{t\in[0,T/\epsi^\gamma ]} \norm{\hat{B}(t) - \Opsum \bigl ( b \circ \Phi^{\eps}_t \bigr )} = \Or(\epsi^{1-4\gamma}) 
	\label{ego4}
\end{align}
holds for any $0 \leq \gamma < \nicefrac{1}{4}$. Concrete applications of these results to the Rabi and the Jaynes-Cummings model are discussed elsewhere \cite{Gat:2013vm}. Note that without spin the semiclassical approximation for the harmonic oscillator $h_0$ is exact. On the extended phase-space $\R^2\times \Stwo$ this is no longer true, even when $h_1$ is linear in $q$ and $p$. However, in Theorem~\ref{exact} we show, that there is still an exact evolution equation for the symbol $a(t)$ of $\hat{A}(t)$. But this equation can no longer be solved by classical transport.

The paper is organized as follows: In Section~\ref{Weyl_calculus} we first recall the standard Weyl calculus on $\R^{2d}$, then introduce the Stratonovich-Weyl calculus on $\Stwo$ and finally show how to combine them. While this idea is not new (\eg \cite{Bolte_Glaser_Keppeler:ergodicity_spinning_particle:2001,Teufel:adiabatic_perturbation_theory:2003}),   the results of Section~\ref{comb} have not been worked out in this form before. Section~\ref{Egorov} contains the rigorous statements and the proofs of the general and the long-time Egorov theorems. 
Their proofs rely on estimates on the derivatives of the classical flow on $\Sigma$ which we show in Section~\ref{flow_estimates}. In the final Section~\ref{Stern-Gerlach} we illustrate the method by applying it to a simple model for the Stern-Gerlach experiment. We conclude, in particular, that the semiclassical approximation captures the spin-dependent splitting of wave packets correctly. \smallskip

\section{Weyl calculus}
\label{Weyl_calculus}
\subsection{Standard Weyl calculus on $\R^{2d}$}%
We briefly recall the most important features of Weyl calculus and specific properties that we will use in the following. Readers familiar with the Weyl calculus and not interested in the mathematical technicalities can skip this section. For more details on $\epsi$-pseudodifferential operators we refer \eg to \cite{Kumanogo:pseudodiff:1981,Taylor:PsiDO:1981,Martinez:intro_microlocal_analysis:2002}. For a short summary on Weyl calculus with operator-valued symbols the readers can also consult Appendix~A of \cite{Teufel:adiabatic_perturbation_theory:2003}.

One common and well-studied class of pseudodifferential operators are those which are Weyl quantizations of symbols. A \emph{symbol $f$ of order $k \in \R$} is a smooth function from $\R^{2d}$ to $\mathcal{B}(\C^n)$ such that 
\begin{align*}
	\norm{f}_{k,r} := \max_{\tiny \substack{\alpha , \beta \in \N_0^d \\ \abs{\alpha} + \abs{\beta} \leq r}} \sup_{(q,p) \in \R^{2d}} \Bnorm{\sexpval{p}^{-k+\abs{\alpha}} \, \partial_p^{\alpha} \partial_q^{\beta} f(q,p)}_{\mathcal{B}(\C^n)} 
\end{align*}
is bounded for all $r \in \N_0$. Here $\sexpval{p}^k:=(1+\sabs{p}^2)^{\nicefrac{k}{2}}$. Equipped with the family of seminorms $\{ \norm{\cdot}_{k,r} \}_{r \in \N_0}$, the set of all symbols of order $k$, denoted with $S^k$, is a Fréchet space. Functions in $S^k$ have the property, that derivatives with respect to $p$ improve the decay with respect to $p$. In particular, any partial derivative of degree $k$ with respect to $p$ of a function in $S^k$ yields a bounded function with bounded derivatives. The space of uniformly bounded functions $A: [0,\eps_0) \longrightarrow S^k $ is denoted by $S^k(\eps)$.

For a Schwartz function $\psi \in \mathcal{S}(\R^d,\C^n)$ and $A\in S^k(\epsi )$ the action of the Weyl quantization $\hat{A} = \Optrans(A)$ of $A$ can be defined through the oscillatory integral 
\begin{align}
	(\hat{A} \psi)(x) = \bigl ( \Optrans(A) \psi \bigr )(x) = \frac{1}{(2\pi \eps)^d}\int_{\R^{2d}} \dd p \, \dd y \,A \bigl ( \eps , \tfrac{1}{2}(x+y) , p \bigr ) \, \e^{+ \frac{\ii}{\eps} p \cdot (x-y)} \, \psi (y) 
	. 
	\label{eq:wq}
\end{align}
For $k \leq 0$, by the Calderon-Vaillancourt theorem \cite[Théorème~II~36]{Robert:tour_semiclassique:1987}, $\hat{A}$ can be extended to a bounded operator on $\Hil = L^2(\R^{d},\C^n)$; we will discuss other boundedness criteria in Section~\ref{Egorov}. 

The inverse of $\Op$ is the Wigner transform: if $K_A$ is the operator kernel to $\hat{A} \in \mathcal{B} \bigl ( L^2(\R^d,\C^n) \bigr )$, then we define 
\begin{align*}
	\bigl ( \Wignertrans(\hat{A}) \bigr )(q,p) := \frac{1}{(2\pi)^{\nicefrac{d}{2}}} \int_{\R^d} \dd y \, \e^{- \ii y \cdot p} \, K_A \bigl ( q + \tfrac{\eps}{2} y , q - \tfrac{\eps}{2} y \bigr ) 
	. 
\end{align*}
The composition of operators induces a composition of symbols. For any $A \in S^{k_1}(\eps)$ and $B \in S^{k_2}(\eps)$, there exists a symbol $C\in S^{k_1+k_2}(\epsi )$ denoted by $C= A\# B$ such that $\hat{A} \, \hat{B}= \hat{C}$. The bilinear map $\#: S^{k_1} \times S^{k_2} \to S^{k_1+k_2}$ is called the Moyal product and it is continuous with respect to the Fréchet topologies uniformly in $\eps$, \ie for any $r\in\N_0$ there is a $\tilde{r} \in \N_0$ and a constant $c_{r,\tilde{r}}<\infty$ such that
\begin{align*}
	 \bnorm{(A\# B)(\eps)}_{k_1+k_2,r} \leq c_{r,\tilde{r}} \, \bnorm{A(\eps)}_{k_1,\tilde{r}} \, \bnorm{B(\eps)}_{k_2,\tilde{r}}
\end{align*}
holds true for all $\epsi\in[0,\epsi_0)$. The last statement follows \eg from inspecting the proof of Thm.~2.41 in \cite{Folland:harmonic_analysis_hase_space:1989}. If for $A\in S^k(\epsi )$ there exists a sequence $(A_j)_{j\in\N_0}$ in $S^k(\epsi )$ such that
\begin{align*}
	\sup_{\eps \in[0,\eps_0) } \Bnorm{\epsi^{-(m+1)} \Bigl ( A(\eps) - \sum_{j=0}^m \eps^j \, A_j(\eps)\Bigr )}_{k,r}  < \infty 
\end{align*}
for all $r \in \N_0 $ and $m\in\N_0$, then one writes $ A \asymp \sum_{j=0}^\infty \epsi^j A_j$ in $S^k(\epsi )$. If $A\in S^k(\epsi )$ has an asymptotic expansion with coefficients $A_j\in S^k $ not depending on $\epsi$, then $A$ is called a classical symbol, $A_0$ its principal symbol and $A_1$ its subprincipal symbol. The Moyal product $C:= A\# B \in S^{k_1+k_2}(\epsi )$ of symbols $A \in S^{k_1}(\epsi )$ and $B \in S^{k_2}(\epsi )$ has an explicit asymptotic expansion $ C \asymp \sum_{j=0}^\infty \eps^j \, C_j$ such that $C_j \in S^{k_1+k_2-j}(\eps)$ and the remainder maps
\begin{align*}
	R_{m+1}: S^{k_1} \times S^{k_2} \to S^{k_1+k_2-m-1} \,, \quad (A,B) \mapsto R_{m+1}:= \eps^{-(m+1)} \Bigl ( C(\eps) - \sum_{j=0}^m \eps^j \, C_j(\eps) \Bigr ) 
\end{align*}
are continuous. The expansion starts with the pointwise product $C_0 (\epsi)= A(\epsi)B(\epsi)$ and the Poisson bracket $C_1(\epsi) = - \tfrac{ \I}{2} \{ A (\epsi), B(\epsi) \}_{\R^{2d}}$, where
\begin{align*}
	\{ A , B \}_{\R^{2d}} := \nabla_p A \cdot \nabla_q B - \nabla_q A \cdot \nabla_p B 
	:= \sum_{j=1}^d \bigl ( \partial_{p_j} A \, \partial_{q_j} B - \partial_{q_j} A \, \partial_{p_j} B \bigr ) 
	.
\end{align*}
For classical symbols $A$ and $B$ the Moyal product $C:= A\# B$ is also a classical symbol with an asymptotic expansion starting with
\begin{align*}
	A \# B \;\asymp\; A_0 B_0 \;+\; \eps \bigl ( A_1 B_0 + A_0 B_1 - \tfrac{\I}{2} \{ A_0, B_0 \}_{\R^{2d}}
	\bigr ) + \order(\eps^2)
	\, .
\end{align*}
If $A = a \, \mathbf{1}_{\C^n}$ is a scalar multiple of the identity, then $A$ and all its derivatives commute pointwise with any $B$. As a consequence one can show that in this case
\begin{align}
	\bigl [ A_0 , B_0 \bigr ]_{\sharp} := A_0 \# B_0 - B_0\# A_0 \asymp -\I \eps \{ A_0 , B_0 \}_{\R^{2d}} + \Or(\epsi^3)
	.
	\label{weylcommu}
\end{align}
The fact that the remainder term in (\ref{weylcommu}) is of order $\epsi^3$ and not only $\epsi^2$ is at the basis of our higher order semiclassical approximations. It distinguishes Weyl quantization from other quantization rules.

\subsection{Stratonovich-Weyl calculus for spin} 
\label{quantizations:S2}

Similar to Weyl calculus that maps functions on phase space $\R^{2d}$ to operators on the Hilbert space $L^2(\R^{d})$, there is a Weyl calculus that associates functions on the compact phase space $\Stwo$ to operators on the finite-dimensional Hilbert spaces $\C^n$. It was first proposed by Stratonovich \cite{Stratonovich:distributions_rep_space:1957} and elaborated upon further by Gracia-Bondia and Varilly \cite{Varilly_Gracia_Bondia:Moyal_rep_spin:1989,Varilly_Gracia_Bondia_Schempp:Moyal_rep_qm:1990} and has been applied to study the dynamics at Josephson junctions \cite{Chuchem_Smith-Mannschott_Hiller_Kottos_Vardi_Cohen:quantum_dynamics_josephson_junction:2010}. 
 
To make the following as transparent as possible, we restrict the presentation to the case of $\C^2$  and use the letter $n$ to denote a point on the unit sphere $\Stwo$ from now on. Higher dimensional spin-spaces can be dealt with in complete analogy \cite{Varilly_Gracia_Bondia:Moyal_rep_spin:1989,Varilly_Gracia_Bondia_Schempp:Moyal_rep_qm:1990}. The basic observation is that any $2 \times 2$ matrix can be written as a linear combination of the identity matrix $\id_{\C^2}$ and the three Pauli matrices $\sigma_j$, $j = 1 , 2 , 3$, 
\begin{align}
	 {A} = \mathfrak{a}_0 \, \id_{\C^2} + \sum_{j = 1}^3 \mathfrak{a}_j \, \sigma_j 
	= \mathfrak{a}_0 \, \id_{\C^2} + \mathfrak{a} \cdot \sigma 
	\label{quantizations:eqn:matrix_sum_of_Pauli}
\end{align}
for some complex coefficients $\mathfrak{a}_0 , \ldots , \mathfrak{a}_3 \in \C$. Now the quantization map $\Opspin: C^\infty(\Stwo)\to \B(\C^2)$ can be defined most easily using the \emph{Stratonovich-Weyl kernel}
\begin{align}
	\Delta(n) := \tfrac{1}{2} \bigl ( \id_{\C^2} + \sqrt{3} \, n \cdot \sigma \bigr ) 
	, 
	\qquad 
	n \in \Stwo
	,
	\label{quantizations:eqn:Stratonovich-Weyl_kernel}
\end{align}
by setting
\begin{align}
	\Opspin(a) := \frac{1}{2\pi} \int_{\Stwo} \D n \, a(n) \, \Delta(n) \,.
	\label{Weyl_S2:eqn:quantization}
\end{align}
It is clearly many to one, but there is a natural way to define also a dequantization map $\Wignerspin: \B(\C^2)\to C^\infty(\Stwo)$ by
\begin{align}
	\bigl ( \Wignerspin( {A}) \bigr )(n) := \tracespin \bigl ( {A} \, \Delta(n) \bigr ) \,.
	\label{quantizations:eqn:defn_Wigner_trafo}
\end{align}
which maps onto the four-dimensional subspace $C_1(\Stwo):={\rm span}\{1,n_1,n_2,n_3\}$ of $C^\infty(\Stwo)$. Using $\tracespin \, \sigma_j = 0$, we find that the dequantization of $ {A}$ written as in \eqref{quantizations:eqn:matrix_sum_of_Pauli} is the linear polynomial
\begin{align}
	a(n) = \bigl ( \Wignerspin( {A}) \bigr )(n) = \mathfrak{a}_0 + \sqrt{3} \, \mathfrak{a} \cdot n 
	, 
	\label{quantizations:eqn:dequantization_linear_polynomial}
\end{align}
and $\Opspin:C_1(\Stwo)\to\B(\C^2) $ is indeed one-to-one with inverse $\Wignerspin$. The projection $\projs = \Wignerspin \circ \Opspin$ maps any $a\in C^\infty(\Stwo)$ to the representative in $C_1(\Stwo)$ that quantizes to the same matrix and  is explicitly given by
\begin{align}
	(\projs a)(n) &= \frac{1}{2\pi} \int_{\Stwo} \D k \, a(k) \, \tracespin \bigl ( \Delta(k) \, \Delta(n) \bigr ) 
	= \frac{1}{4\pi} \int_{\Stwo} \D k \, a(k) \, (1 + 3 k \cdot n) 
	\notag \\
	&= \frac{1}{4\pi} \int_{\Stwo} \D k \, a(k) + \sqrt{3} \, \left ( \frac{\sqrt{3}}{4\pi} \int_{\Stwo} \D k \, k \, a(k) \right ) \cdot n
	\label{quantizations:eqn:projs_definition} 
	\\
	&=: \mathfrak{a}_0 + \sqrt{3} \, \mathfrak{a} \cdot n 
	\notag 
\end{align}
For later reference we note that for all $a \in C^\infty(\Stwo)$ we have that
\begin{align}
	\projs a &= a && \mbox{for all $a\in C_1(\Stwo)$} 
	\label{C1prop} \\
	\Opspin \projs a &= \Opspin a&& \mbox{for all $a\in C^\infty(\Stwo)$. } 
	\label{projsprop}
\end{align}
Since the symbol $a$ of a given matrix $A$ is not unique, there is a priori also no unique way to define the corresponding Moyal product $\spinW$, \ie the matrix product on the level of functions on $\Stwo$. However, if we demand that it takes its values in $C_1(\Stwo)$, it is unique and one finds
\begin{align}
	(a \, \spinW \, b)(n) :& \negmedspace= \Wignerspin \bigl ( \Opspin(a) \, \Opspin(b) \bigr )
	= \Wignerspin \bigl ( \Opspin(\projs a) \, \Opspin(\projs b) \bigr ) 
	\notag \\
	&= \Wignerspin \bigl ( ( \mathfrak{a}_0 \, \id_{\C^2} + \mathfrak{a} \cdot \sigma ) \, ( \mathfrak{b}_0 \, \id_{\C^2} + \mathfrak{b} \cdot \sigma ) \bigr ) 
	\notag \\
	&= \Wignerspin \bigl ( ( \mathfrak{a}_0 \mathfrak{b}_0 + \mathfrak{a} \cdot \mathfrak{b} ) + ( \mathfrak{a}_0 \mathfrak{b} + \mathfrak{a} \mathfrak{b}_0 + \I \, \mathfrak{a} \wedge \mathfrak{b} ) \cdot \sigma \bigr ) 
	\notag \\
	&= ( \mathfrak{a}_0 \mathfrak{b}_0 + \mathfrak{a}\cdot \mathfrak{b} ) + \sqrt{3} \, ( \mathfrak{a}_0 \mathfrak{b} + \mathfrak{a} \mathfrak{b}_0 + \I \, \mathfrak{a} \wedge \mathfrak{b} ) \cdot n 
	.
	\label{SpinMoyal}
\end{align}
As for the calculus on $\R^{2d}$, the applicability to the semiclassical limit of the Heisenberg equations of motion rests on the observation that the commutator of operators corresponds to the Poisson bracket of symbols. The natural symplectic form on $\Stwo$ is the volume form $\eta$, which we normalize such that for two tangent vectors $v,w$ at $n\in \Stwo$
\[
\eta_n( v,w) = -\tfrac{\sqrt{3}}{2} (v\wedge w )\cdot n\,.
\]
Thus the Hamiltonian vector field associated to a function $a\in C^\infty(\Stwo)$ is 
\[
X^a = - \tfrac{2}{\sqrt{3}} \; \nabla_n a\wedge n
\]
and a short computation yields
\begin{align*}
	\{ a , b \}_{\Stwo} := \eta(X^a,X^b) = -\tfrac{2}{\sqrt{3}} \bigl ( \nabla_n a \wedge \nabla_n b \bigr ) \cdot n
	\,.
\end{align*}
Comparing with \eqref{SpinMoyal}, we immediately see that 
\begin{align}
	[ a , b ]_{\spinW} := a \, \spinW b - b \, \spinW a 
	= - \ii \, \bigl \{ \projs a , \projs b \bigr \}_{\Stwo} 
	. 
	\label{quantizations:eqn:spin_commutator}
\end{align}
Now there is an observation which is crucial for the following: 
\begin{lemma}
	Let $a,b \in C^\infty(\Stwo)$, then
	\begin{align}
		\bigl \{ \projs a , \projs b \bigr \}_{\Stwo} =\projs \bigl \{ \projs a , b \bigr \}_{\Stwo} 
		\label{lemmaPoisson}
		. 
	\end{align}
\end{lemma}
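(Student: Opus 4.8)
The plan is to exploit the very explicit description of $\projs$ in \eqref{quantizations:eqn:projs_definition}: for any $c \in C^\infty(\Stwo)$, $\projs c$ is the unique element of $C_1(\Stwo) = \span\{1,n_1,n_2,n_3\}$ with the same ``moments of order $\le 1$'' as $c$, namely $\projs c = \mathfrak c_0 + \sqrt 3 \,\mathfrak c\cdot n$ with $\mathfrak c_0 = \frac{1}{4\pi}\int_{\Stwo} c$ and $\mathfrak c = \frac{\sqrt 3}{4\pi}\int_{\Stwo} k\,c(k)\,\D k$. So to prove \eqref{lemmaPoisson} it suffices to show that the two functions $\{\projs a,\projs b\}_{\Stwo}$ and $\{\projs a,b\}_{\Stwo}$ have the same projection onto $C_1(\Stwo)$, \ie the same zeroth and first moments. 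Note that the left-hand side is already in $C_1(\Stwo)$ (the Poisson bracket of two linear polynomials on $\Stwo$ is again linear, as one sees directly from $\{a,b\}_{\Stwo} = -\tfrac{2}{\sqrt 3}(\nabla_n a\wedge\nabla_n b)\cdot n$ when $a,b$ are linear in $n$), so $\projs\{\projs a,\projs b\}_{\Stwo} = \{\projs a,\projs b\}_{\Stwo}$ by \eqref{C1prop}; hence the claim is equivalent to $\projs\{\projs a,\projs b\}_{\Stwo} = \projs\{\projs a, b\}_{\Stwo}$, \ie the difference $\projs\{\projs a,\, b - \projs b\}_{\Stwo}$ vanishes.

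The key computation is therefore: writing $f := \projs a$, which is a linear polynomial $f = \mathfrak a_0 + \sqrt 3\,\mathfrak a\cdot n$, and $g := b - \projs b$, I must show the zeroth and first moments of $\{f,g\}_{\Stwo} = -\tfrac{2}{\sqrt 3}(\nabla_n f\wedge\nabla_n g)\cdot n$ over $\Stwo$ vanish. Since $\nabla_n f$ is the constant vector $\sqrt 3\,\mathfrak a$ (more precisely its tangential part; but the $\wedge n$ in the Hamiltonian vector field kills the normal component, so I can just treat $\nabla_n f = \sqrt 3\,\mathfrak a$), the bracket becomes $\{f,g\}_{\Stwo} = -2\,(\mathfrak a\wedge\nabla_n g)\cdot n$. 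The moments I need are thus integrals over $\Stwo$ of $(\mathfrak a\wedge\nabla_n g)\cdot n$ times $1$ or times $n_i$. I would rewrite these using the divergence theorem on $\Stwo$: $\{f,g\}_{\Stwo}$ is the Lie derivative of $g$ along the divergence-free vector field $X^f$, so $\int_{\Stwo}\{f,g\}_{\Stwo} = 0$ automatically (integral of a Lie derivative along a Killing/volume-preserving field of a function), giving the zeroth moment for free — in fact for $b$, not just $g$. For the first moments $\int_{\Stwo} n_i\,\{f,g\}_{\Stwo}\,\D n$, integrate by parts moving $X^f$ onto $n_i\cdot(\text{nothing else})$: $\int n_i\,X^f(g) = -\int g\,X^f(n_i)$, and $X^f(n_i) = \{f, n_i\}_{\Stwo}$ is again a linear polynomial in $n$ (it equals $(\text{const vector})\cdot n$ depending linearly on $\mathfrak a$). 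Crucially $g = b - \projs b$ is orthogonal to $C_1(\Stwo)$ with respect to the $L^2(\Stwo)$ inner product — this is exactly what $\projs$ being the $L^2$-orthogonal projection onto $C_1(\Stwo)$ means, and it can be read off \eqref{quantizations:eqn:projs_definition} since $\tracespin(\Delta(k)\Delta(n))=\tfrac14(1+3k\cdot n)$ is, up to a scalar, the reproducing kernel of $C_1(\Stwo)$. Hence $\int_{\Stwo} g\cdot(\text{linear in }n) = 0$, so the first moments vanish too.

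Putting it together: both $\{\projs a,\projs b\}_{\Stwo}$ and $\{\projs a, b\}_{\Stwo}$ lie in $C_1(\Stwo)$ after applying $\projs$ — the former already does — and their moments of order $\le 1$ agree because the discrepancy is a Poisson bracket of the linear polynomial $\projs a$ with the $C_1(\Stwo)^\perp$-function $b-\projs b$, whose low moments I have just shown to vanish. By uniqueness of the representative in $C_1(\Stwo)$ (equivalently, injectivity of $\projs|_{C_1(\Stwo)}$, cf. \eqref{C1prop}), this yields \eqref{lemmaPoisson}. The main obstacle, and the one place to be careful, is the bookkeeping around $\nabla_n$ versus the covariant/tangential gradient on $\Stwo$ and the precise normalization constants in the integration-by-parts steps; but conceptually it all reduces to the single fact that $\projs$ is orthogonal projection onto the space $C_1(\Stwo)$ of spherical harmonics of degree $\le 1$, and that this space is preserved by the Hamiltonian vector fields $X^f$ with $f$ linear.
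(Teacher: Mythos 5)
Your argument is correct, but it is organized differently from the paper's proof, so a brief comparison is in order. You split $b = \projs b + (1-\projs)b$, observe that $\{\projs a,\projs b\}_{\Stwo}$ is already a linear polynomial in $n$ and hence fixed by $\projs$ via \eqref{C1prop}, and reduce the claim to $\projs \{\projs a, (1-\projs)b\}_{\Stwo} = 0$; this you get from two structural facts, namely that $\projs$ is the $L^2(\Stwo)$-orthogonal projection onto $C_1(\Stwo)$ (which indeed follows from \eqref{quantizations:eqn:projs_definition}, since $\tfrac{1}{4\pi}(1+3k\cdot n)$ is the reproducing kernel of $C_1(\Stwo)$), and that the Hamiltonian vector field $X^{\projs a} = -2\,\mathfrak{a}\wedge n$ generates rotations, so it preserves the surface measure (giving the zeroth moment) and maps $n_i$ back into $C_1(\Stwo)$, so that after one integration by parts the first moments are integrals of $(1-\projs)b$ against elements of $C_1(\Stwo)$ and vanish by orthogonality. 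The paper instead works directly with $b$: it computes the zeroth and first moments of $\{\projs a, b\}_{\Stwo}$ by explicit integration by parts with the Levi-Civita symbol, finds that the first moments reproduce exactly the coefficients $\mathfrak{b}$ of $\projs b$, and matches the result against the explicit formula $\{\projs a,\projs b\}_{\Stwo} = -2\sqrt{3}\,(\mathfrak{a}\wedge\mathfrak{b})\cdot n$. Your route buys conceptual clarity and avoids tracking normalization constants such as $\tfrac{4\pi}{\sqrt{3}}$, at the price of having to state and justify the two structural facts (which you do, at least in outline); the paper's computation is more pedestrian but entirely self-contained and delivers the matching coefficients explicitly. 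Your caveats about the tangential versus ambient gradient and about the sign convention $\{f,g\}_{\Stwo} = \pm X^f(g)$ are indeed harmless: the wedge with $n$ kills the normal component, and both moments vanish irrespective of the overall sign. To turn the plan into a complete proof you should simply write out the one-line verifications of orthogonality of $\projs$ and of $X^{\projs a}(n_i) \in C_1(\Stwo)$, as these carry the whole argument.
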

\begin{proof}
	We write $(\projs a)(n) = \mathfrak{a}_0 + \sqrt{3} \, \mathfrak{a} \cdot n $ and $(\projs b)(n) = \mathfrak{b}_0 + \sqrt{3} \, \mathfrak{b} \cdot n $. Then
\begin{align*}
	\projs \bigl \{ \projs a , b \bigr \}_{\Stwo} = - \frac{2}{4\pi} \int_{\Stwo} \D k \, \bigl ( \mathfrak{a} \wedge \nabla_n b(k) \bigr ) \cdot k - \sqrt{3} \, \left ( \frac{2\sqrt{3}}{4\pi} \int_{\Stwo} \D k \, k \, \bigl ( \mathfrak{a} \wedge \nabla_n b(k) \bigr ) \cdot k \right ) \cdot n\,.
\end{align*}
Integration by parts gives for the first term
\[
 \int_{\Stwo} \D k \, \bigl ( \mathfrak{a} \wedge \nabla_n b \bigr ) \cdot k = 
 \int_{\Stwo} \D k \, \epsilon_{jlm} \, \mathfrak{a}_j \, \partial_l b(k)\, k_m = 0
\]
and for the $j$th component of the second term
\begin{align*}
	\int_{\Stwo} \D k \, k_j \, \bigl ( \mathfrak{a} \wedge \nabla_n b(k) \bigr ) \cdot k 
	&= \int_{\Stwo} \D k \, k_j \, \epsilon_{lms} \, \mathfrak{a}_l \, \partial_m b(k)\, k_s 
	= - \int_{\Stwo} \D k \, \epsilon_{ljs} \, \mathfrak{a}_l \, b(k)\, k_s 
	\\
	&= - \epsilon_{jsl} \, \mathfrak{a}_l \, \int_{\Stwo} \D k \, b(k)\, k_s 
	= (\mathfrak{a}\wedge\mathfrak{b})_j \, \tfrac{4\pi}{\sqrt{3}}
	\, .
\end{align*}
Comparing with $\bigl \{ \projs a , \projs b \bigr \}_{\Stwo} = -2\sqrt{3}(\mathfrak{a}\wedge\mathfrak{b})\cdot n $ proves the claim.
\end{proof}

\begin{corollary}
Let $a \in C_1(\Stwo)$ and $b\in C^\infty(\Stwo)$. Then
\begin{equation}\label{spincom}
	\bigl [ \Opspin a, \Opspin b \bigr ] = - \I \, \Opspin \{ a , b \}_{\Stwo}\,.
\end{equation}
\end{corollary}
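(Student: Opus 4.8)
The plan is to chain together three facts already established in the excerpt: the intertwining property $\Opspin(a \spinW b) = \Opspin(a)\,\Opspin(b)$ built into the definition of the spin-Moyal product, the commutator identity \eqref{quantizations:eqn:spin_commutator}, and the projection Lemma \eqref{lemmaPoisson} together with \eqref{C1prop} and \eqref{projsprop}.

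First I would observe that, since $\Opspin$ turns the product $\spinW$ into the matrix product, it also turns the Moyal commutator into the matrix commutator:
\begin{align*}
	\bigl [ \Opspin a , \Opspin b \bigr ] = \Opspin \bigl ( a \spinW b - b \spinW a \bigr ) = \Opspin \bigl ( [a,b]_{\spinW} \bigr )
	\, .
\end{align*}
By \eqref{quantizations:eqn:spin_commutator} this equals $- \I \, \Opspin \bigl ( \{ \projs a , \projs b \}_{\Stwo} \bigr )$, so the whole statement reduces to checking that $\Opspin \{ \projs a , \projs b \}_{\Stwo} = \Opspin \{ a , b \}_{\Stwo}$ under the hypothesis $a \in C_1(\Stwo)$.

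Next, using \eqref{C1prop} we have $\projs a = a$, so $\{ \projs a , \projs b \}_{\Stwo} = \{ a , \projs b \}_{\Stwo}$; then Lemma \eqref{lemmaPoisson} (with its first argument already in $C_1(\Stwo)$) gives $\{ a , \projs b \}_{\Stwo} = \projs \{ a , b \}_{\Stwo}$. Finally \eqref{projsprop} removes the projector after quantization, $\Opspin \projs \{ a , b \}_{\Stwo} = \Opspin \{ a , b \}_{\Stwo}$, which yields the claim.

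There is essentially no hard step here: the content is entirely packaged in Lemma \eqref{lemmaPoisson}, and the only thing to be a little careful about is that $\{a,b\}_{\Stwo}$ for general $b$ need not lie in $C_1(\Stwo)$, so one genuinely needs both the projection identity of the Lemma \emph{and} the invariance \eqref{projsprop} of $\Opspin$ under $\projs$ — replacing $b$ by $\projs b$ at the outset would not be legitimate since $\{a,\cdot\}_{\Stwo}$ does not commute with $\projs$ on the nose, only after composing with $\projs$ again.
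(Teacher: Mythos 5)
Your argument is correct and is essentially the paper's own proof: both chain $[\Opspin a,\Opspin b]=\Opspin[a,b]_{\spinW}$, the commutator identity \eqref{quantizations:eqn:spin_commutator}, Lemma \eqref{lemmaPoisson}, and the identities \eqref{C1prop}, \eqref{projsprop}, the only difference being that you use $\projs a=a$ at the start whereas the paper applies it at the very end of the same chain. Your closing caveat about not being allowed to replace $b$ by $\projs b$ inside the Poisson bracket is accurate and consistent with why the Lemma is needed.
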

\begin{proof} 
	\begin{align*}
		\bigl [ \Opspin a, \Opspin b \bigr ] &= \Opspin \bigl [ a , b \bigr ]_{\spinW} 
		\stackrel{(\ref{quantizations:eqn:spin_commutator})}{=} - \I\, \Opspin \bigl \{ \projs a , \projs b \bigr \}_{\Stwo} 
		\stackrel{(\ref{lemmaPoisson})}{=} - \I \,\Opspin \projs\bigl \{ \projs a , b \bigr \}_{\Stwo}
		\\
		&\stackrel{(\ref{projsprop})}{=} - \I\, \Opspin \bigl \{ \projs a , b \bigr \}_{\Stwo} 
		\stackrel{(\ref{C1prop})}{=} - \I \,\Opspin \bigl \{ a , b \bigr \}_{\Stwo}\,.
	\end{align*}
\end{proof}

\subsection{Weyl calculus on $\R^{2d}\times \Stwo$}\label{comb}

It is now straightforward to define a Weyl calculus on the product $\Sigma = \R^{2d}\times \Stwo$. We say $a\in S^k_\Sigma(\epsi)$, if $a(\epsi)\in C^\infty(\Sigma)$ such that $\Opspin a\in S^k(\epsi)$. For $a\in S^k_\Sigma(\epsi)$ define $\Opsum$ and $\Wignersum$ by
\begin{align*}
	\Opsum(a) :& \negmedspace= \Optrans \bigl ( \Opspin(a) \bigr ) 
	, 
	\\
	\Wignersum(\hat{A}) :& \negmedspace= \Wignertrans \bigl ( \Wignerspin(\hat{A}) \bigr ) 
	, 
\end{align*}
and let 
\[
\{a,b\}_\Sigma := \{a,b\}_{\R^{2d}} + \tfrac{1}{\epsi} \{a,b\}_{\Stwo} \,.
\]
We now formulate our assumptions on the symbol of the Hamiltonian.
\begin{assumption}\label{assumption:hamiltonian_symbol}
	Let $h(q,p,n) = h_0(q,p) + \eps \, h_1(q,p,n)$ such that $h_0 \in S^2$ and 
	\[
	h_1(q,p,n) = \mathfrak{h}_0(q,p) + \sqrt{3} \, \mathfrak{h}(q,p)\cdot n
	\]
	with $ \mathfrak{h}_0$ and all components of $\mathfrak{h}(q,p)$ in $S^1$. 
\end{assumption}
Note that in view of \eqref{quantizations:eqn:projs_definition} and \eqref{projsprop}, assuming $h$ is a linear polynomial in $n$ is not a restriction, merely a convenient way to write the symbol. 
\begin{lemma}\label{lemma:commutator_to_Poisson}
	Let $h$ satisfy Assumption~\ref{assumption:hamiltonian_symbol} and let $a \in S^0_\Sigma(\epsi)$.
	\begin{enumerate}[(i)]
		\item Then 
		\begin{align*}
			\Bnorm{\tfrac{\I}{\epsi} \bigl [ \Opsum h \, , \Opsum a \bigr ] - \Opsum \{h,a\}_\Sigma}_{\B(\Hi)} = \Or(\epsi) 
			.
		\end{align*}
		\item If, in addition, there is a symbol $a_0$ such that $a(q,p,n) - a_0(q,p) = \order(\eps)$, then
		\begin{align*}
			\Bnorm{\tfrac{\I}{\epsi} \bigl [ \Opsum h \, , \Opsum a \bigr ] - \Opsum \{ h , a \}_{\Sigma}}_{\B(\Hi)} = \order(\eps^2)
			.
		\end{align*}
		\item Suppose that $h_0$ is a quadratic polynomial in the components of $p$ and $q$, and the components of $\mathfrak{h}_0$ and $\mathfrak{h}$ are linear polynomials in $q$ and $p$. Then 
		\begin{align*}
			\tfrac{\I}{\epsi} \bigl [ \Opsum h, \Opsum a \bigr ] &= \Opsum \{h,a\}_\Sigma
			\\
			&\qquad + \tfrac{\epsi}{2} \, \Optrans \big ( \{H_1, A\}_{\R^{2d}}
			-\{A,H_1\}_{\R^{2d}} - 2 \,\Opspin \{ h_1, a\}_{\R^{2d}}
			\big ) 
			\\
			&= \Opsum \{h,a\}_\Sigma - \tfrac{\epsi}{2}\,\Opsum\{ h_1, (1-\mathcal{P})a\}_{\R^{2d}}
		\end{align*}
		holds true where $H_1 = \Opspin h_1$ and $A=\Opspin a$.
	\end{enumerate}
\end{lemma}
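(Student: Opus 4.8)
The plan is to compute the commutator $\tfrac{\I}{\epsi}[\Opsum h,\Opsum a]$ by factoring the quantization $\Opsum = \Optrans\circ\Opspin$ and tracking the two Moyal products $\#$ (on $\R^{2d}$) and $\spinW$ (on $\Stwo$) separately, because they act on independent tensor factors. First I would write $H = \Opspin h = H_0 + \epsi H_1$ with $H_0 = h_0\,\id_{\C^2}$ and $H_1 = \Opspin h_1$, and $A = \Opspin a$, so that $\Opsum h = \Optrans(H)$ and $\Opsum a = \Optrans(A)$. The commutator on $L^2(\R^d,\C^2)$ is then $\Optrans([H,A]_\#)$ where $[H,A]_\# = H\#A - A\#H$ is taken with matrix-valued symbols. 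I split $[H,A]_\# = [H_0,A]_\# + \epsi[H_1,A]_\#$. For the first piece I invoke the key identity \eqref{weylcommu}: since $H_0 = h_0\,\id_{\C^2}$ is a scalar multiple of the identity, $[H_0,A]_\# = -\I\epsi\{h_0,A\}_{\R^{2d}} + \Or(\epsi^3)$, with the Poisson bracket understood entrywise. For the second piece, $\epsi[H_1,A]_\#$, I expand the Moyal product to first order: $[H_1,A]_\# = [H_1,A] - \tfrac{\I}{2}\epsi(\{H_1,A\}_{\R^{2d}} - \{A,H_1\}_{\R^{2d}}) + \Or(\epsi^2)$, where $[H_1,A]$ is the pointwise matrix commutator. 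Dividing by $\epsi$ and quantizing with $\Optrans$ assembles a main term $\Optrans(-\I\{h_0,A\}_{\R^{2d}} + [H_1,A])$ plus an $\Or(\epsi)$ remainder.

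Next I reconcile this with $\Opsum\{h,a\}_\Sigma$. By definition $\{h,a\}_\Sigma = \{h,a\}_{\R^{2d}} + \tfrac{1}{\epsi}\{h,a\}_{\Stwo}$. The $\tfrac{1}{\epsi}$ factor is cancelled because $\{h,a\}_{\Stwo} = \epsi\{h_1,a\}_{\Stwo}$ (only the $\epsi h_1$ part of $h$ depends on $n$), so $\tfrac{1}{\epsi}\{h,a\}_{\Stwo} = \{h_1,a\}_{\Stwo}$. Applying $\Opspin$ and using the Corollary to the Lemma above — precisely, $[\Opspin h_1,\Opspin a]$ relates to $\Opspin\{h_1,a\}_{\Stwo}$ via $-\I$, after inserting $\projs$ where needed — I identify $\Opspin([H_1,A])$, i.e. the pointwise-commutator term, with $-\I\,\Opspin\{h_1,a\}_{\Stwo}$ up to the projection $\projs$; here one must be careful that $h_1$ is already linear in $n$ (lies in $C_1(\Stwo)$) but $a$ need not be, so the clean identity \eqref{spincom} applies with $a\leftrightarrow h_1$, $b\leftrightarrow a$. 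Also $\Opspin\{h_0,a\}_{\R^{2d}} = \{h_0,\Opspin a\}_{\R^{2d}}$ since $h_0$ is scalar and the $\R^{2d}$-derivatives commute with $\Opspin$. Putting the pieces together gives part (i): the main term is exactly $\Opsum\{h,a\}_\Sigma$ and the remainder is $\Or(\epsi)$ by the continuity of the Moyal remainder maps $R_{m+1}$ and the Calderón–Vaillancourt bound, applied within the symbol classes $S^k_\Sigma(\epsi)$.

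For part (ii), the extra hypothesis $a - a_0 = \order(\epsi)$ with $a_0$ scalar in $n$ (i.e. $a_0(q,p)$) kills the leading contribution of the dangerous terms: the $\Or(\epsi)$ error in part (i) comes from (a) the $\Or(\epsi^3)$ tail in \eqref{weylcommu} applied to $H_0$, which after dividing by $\epsi$ is $\Or(\epsi^2)$ and harmless, and (b) the $\Or(\epsi^2)$ second-order Moyal terms in $\epsi[H_1,A]_\#$, which after dividing by $\epsi$ is $\Or(\epsi)$ — but these involve second $\R^{2d}$-derivatives of $A$, and when $A = a_0\,\id + \order(\epsi)$ the leading part $a_0\,\id$ commutes pointwise with $H_1$, so by the same scalar-symbol mechanism as \eqref{weylcommu} that contribution is pushed to $\Or(\epsi^3)$, leaving $\order(\epsi^2)$ overall. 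One also has to check the first-order antisymmetric Moyal piece $-\tfrac{\I}{2}\epsi(\{H_1,A\}-\{A,H_1\})$: for $A$ scalar to leading order this is $-\I\epsi\{H_1,a_0\}_{\R^{2d}}$ which is exactly (part of) $\Opspin\{h_1,a\}_{\R^{2d}}$ to leading order, and is already accounted for in $\{h,a\}_\Sigma$; the genuinely-matrix correction is $\order(\epsi)$ relative, hence $\order(\epsi^2)$ absolute.

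For part (iii), the polynomial hypotheses make everything exact: if $h_0$ is quadratic and $\mathfrak{h}_0,\mathfrak{h}$ linear in $(q,p)$, then the Moyal expansions on $\R^{2d}$ terminate — $[H_0,A]_\# = -\I\epsi\{h_0,A\}_{\R^{2d}}$ exactly (all higher terms vanish because third and higher $(q,p)$-derivatives of the quadratic $h_0$ are zero, and \eqref{weylcommu}'s $\Or(\epsi^3)$ tail is literally absent), and for $H_1$ linear in $(q,p)$ the product $H_1\#A$ has only the pointwise and first-order terms, $H_1\#A = H_1 A - \tfrac{\I}{2}\epsi\{H_1,A\}_{\R^{2d}}$, with no remainder. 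Thus $\tfrac{\I}{\epsi}[H_0 + \epsi H_1, A]_\# = \{h_0\,\id,A\}_{\R^{2d}} + \tfrac{\I}{\epsi}[H_1,A] + \tfrac{\epsi}{2}(\{H_1,A\}_{\R^{2d}} - \{A,H_1\}_{\R^{2d}})$ with no error; applying $\Optrans$ and using $\Opspin\{h_1,a\}_{\Stwo}$-identification for the pointwise commutator as in part (i) gives the first displayed equality. The second displayed equality is algebraic: one checks that $\{H_1,A\}_{\R^{2d}} - \{A,H_1\}_{\R^{2d}} - 2\Opspin\{h_1,a\}_{\R^{2d}}$ equals $-2\Opspin\{h_1,(1-\projs)a\}_{\R^{2d}}$. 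This follows because $\{H_1,A\}_{\R^{2d}} - \{A,H_1\}_{\R^{2d}}$ is twice the antisymmetric (matrix-commutator) part of the pointwise bracket, while $2\Opspin\{h_1,\projs a\}_{\R^{2d}}$ — using $\Opspin a = \Opspin\projs a$ and the fact that $\Opspin$ of a Poisson bracket in $(q,p)$ of two $C_1$-symbols is the matrix-commutator-symmetric combination via \eqref{SpinMoyal} — cancels the symmetric remainder, leaving the $(1-\projs)a$ term; the coefficient $\mathfrak h_1$ of $h_1$ being already in $C_1$ is what makes $\projs$ act only on the $a$-slot. \textbf{The main obstacle} I anticipate is this last bookkeeping step (iii): correctly separating the matrix commutator $[H_1,A]$ from the anticommutator/symmetric part of the $\R^{2d}$-Poisson bracket and matching both against $\Opspin$ of the two $\Sigma$-brackets, keeping the $\sqrt{3}$ normalization factors from \eqref{quantizations:eqn:projs_definition}, \eqref{SpinMoyal} and the symplectic form $\eta$ consistent — a small sign or factor error there propagates into the whole later Egorov argument. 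The analytic part (the $\Or$/$\order$ claims in (i),(ii)) is comparatively routine given the continuity of the Moyal remainder maps and Calderón–Vaillancourt.
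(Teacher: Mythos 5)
Your proposal follows essentially the same route as the paper: split $[H_0+\eps H_1,A]_{\sharp}$, use the scalar-symbol identity \eqref{weylcommu} for $H_0 = h_0\,\id_{\C^2}$ and the first-order Moyal expansion for $H_1$, convert the pointwise commutator $[H_1,A]$ into $\Opspin\{h_1,a\}_{\Stwo}$ via \eqref{spincom}, exploit in (ii) that the combination $\{H_1,A\}_{\R^{2d}}-\{A,H_1\}_{\R^{2d}}-2\,\Opspin\{h_1,a\}_{\R^{2d}}$ vanishes for $n$-independent $a_0$, and in (iii) use exactness of the expansions for quadratic/linear symbols together with the vanishing of this combination on $C_1(\Stwo)$-valued symbols plus $\Opspin a=\Opspin\projs a$. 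Two minor remarks: in (ii) you initially misattribute the $\Or(\eps)$ error to the second-order Moyal terms of $\eps\,[H_1,A]_{\sharp}$ (those are already $\Or(\eps^2)$ after dividing by $\eps$), though your final sentence then correctly treats the true culprit, the first-order antisymmetric piece; and your value $-2\,\Opspin\{h_1,(1-\projs)a\}_{\R^{2d}}$ in (iii) is the consistent bookkeeping, whereas the paper's corresponding step (and the second displayed equality of the lemma) appears to drop a factor of $2$.
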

\begin{proof} 
	\begin{enumerate}[(i)]
		\item For the scalar symbol $h_0\in S^2$ we have according to (\ref{weylcommu})
		\begin{align}
			\bigl [ \Opspin h_0 \, , \Opspin a \bigr ]_{\sharp} &= - \I \, \eps \, \bigl \{ \Opspin h_0 , \Opspin a \bigr \}_{\R^{2d}} + \Or(\eps^3)
			,
			\label{moy1}
		\end{align}
		where the remainder is order $\epsi^3$ in $S^{-1}(\epsi)$. For $h_1$ 
		\begin{align}
			\bigl [ \Opspin h_1 \, , \Opspin a \bigr ]_{\sharp} &= \bigl [ \Opspin h_1, \Opspin a \bigr ] + 
			\label{moy2} \\
			&\qquad 
			- \eps \, \tfrac{\ii}{2} \left ( \bigl \{ \Opspin h_1, \Opspin a \bigr \}_{\R^{2d}} - \bigl \{ \Opspin a, \Opspin h_1 \bigr \}_{\R^{2d}} \right ) + \Or(\epsi^2)
			\notag 
		\end{align}
		where the remainder is order $\epsi^2$ in $S^{-1}(\epsi)$. Hence
		\begin{align*}
			\tfrac{\I}{\epsi} \bigl [ \Opsum h \, , \Opsum a \bigr ] 
			&= \Optrans \Bigl ( \{\Opspin h_0, \Opspin a\}_{\R^{2d}} + \I [ \Opspin h_1, \Opspin a] 
			+ \Bigr . 
			\\
			&\qquad \qquad \Bigl . 
			+ \tfrac{\epsi}{2} \bigl ( \{\Opspin h_1, \Opspin a\}_{\R^{2d}}
			-\{\Opspin a, \Opspin h_1\}_{\R^{2d}} \bigr )
			\Bigr ) +\Or(\epsi^2)
			,
			\\
			&\stackrel{(\ref{spincom})}{=} \Optrans \Bigl ( \Opspin \{ h_0, a\}_{\R^{2d}}+ \Opspin \{h_1,a\}_{\Stwo} + \epsi \,\Opspin \{ h_1, a\}_{\R^{2d}}
			+ \Bigr . \\
			&\qquad \qquad \quad \Bigl . 
			+ \tfrac{\epsi}{2} \bigl ( \{\Opspin h_1, \Opspin a\}_{\R^{2d}}
			-\{\Opspin a, \Opspin h_1\}_{\R^{2d}}
			+ \Bigr . \\
			&\qquad \qquad \quad \Bigl . 
			-2 \,\Opspin \{ h_1, a\}_{\R^{2d}} \bigr )
			\Bigr ) + \Or(\epsi^2) 
			\\
			&= \Opsum \{h,a\}_\Sigma
			+ \tfrac{\epsi}{2} \, \Optrans \bigl ( \{\Opspin h_1, \Opspin a\}_{\R^{2d}}
			+ \Bigr . \\
			&\qquad \qquad \quad \Bigl . 
			-\{\Opspin a, \Opspin h_1\}_{\R^{2d}}
			-2 \,\Opspin \{ h_1, a\}_{\R^{2d}}\bigr ) + \Or(\epsi^2)
			,
		\end{align*}
		where the second term of order $\epsi$ is a bounded operator and the remainder is order $\epsi^2$ as a bounded operator. 
		\item Under the assumption of (ii), also the second term is of order $\epsi^2$, since 
		\begin{align*}
			\{\Opspin h_1, &\Opspin a\}_{\R^{2d}} - \{\Opspin a, \Opspin h_1\}_{\R^{2d}} - 2 \, \Opspin \{ h_1, a\}_{\R^{2d}} = 
			\\
			&= \{\Opspin h_1, \Opspin (a-a_0)\}_{\R^{2d}} - \{\Opspin (a-a_0), \Opspin h_1\}_{\R^{2d}} 
			+ \\
			&\quad \;
			- 2 \,\Opspin \{ h_1, (a-a_0)\}_{\R^{2d}} 
			\\
			&
			= \Or(\epsi) 
			.
		\end{align*}
		\item For the first equality in (iii) just note that there are no remainder terms in (\ref{moy1}) and (\ref{moy2}) in this case. 

		For symbols that are at most linear polynomials in $n$ the explicit remainder term vanishes.
		\begin{lemma}
		Let $a,b \in C^\infty(\Sigma) $ such that $a(q,p,\cdot), b(q,p,\cdot)\in C_1(\Stwo)$ for all $(q,p)\in\R^{2d}$. Then
		\[
		R(a,b):= \{\Opspin a, \Opspin b\}_{\R^{2d}}
		-\{\Opspin b, \Opspin a\}_{\R^{2d}}
		-2 \,\Opspin \{ a, b\}_{\R^{2d}}
		=0\,.
		\]
		\end{lemma}
		\begin{proof}
			Let $a = \mathfrak{a}_0 + \sqrt{3} \, \mathfrak{a} \cdot n$ and $b = \mathfrak{b}_0 + \sqrt{3} \, \mathfrak{b} \cdot n$ be the scalar symbols to the matrix-valued functions $A = \mathfrak{a}_0 \, \id_{\C^2} + \mathfrak{a} \cdot \sigma$ and $B = \mathfrak{b}_0 \, \id_{\C^2} + \mathfrak{b} \cdot \sigma$. We can write the difference of the Poisson brackets as the difference of two anti-commutators, 
			\begin{align}
				\{ A , B \}_{\R^d} - \{ B , A \}_{\R^d} &= \sum_{j = 1}^d \left ( [\partial_{p_j} A \, , \partial_{q_j} B]_+ - [\partial_{p_j} A \, , \partial_{q_j} B]_+ \right ) 
				. 
				\notag 
			\end{align}
			These anti-commutators can be expressed in terms of the coefficients, 
			\begin{align}
				[A \, , B]_+ &= \bigl ( \mathfrak{a}_0 \, \id_{\C^2} + \mathfrak{a} \cdot \sigma \bigr ) \, \bigl ( \mathfrak{b}_0 \, \id_{\C^2} + \mathfrak{b} \cdot \sigma \bigr ) - \bigl ( \mathfrak{b}_0 \, \id_{\C^2} + \mathfrak{b} \cdot \sigma \bigr ) \, \bigl ( \mathfrak{a}_0 \, \id_{\C^2} + \mathfrak{a} \cdot \sigma \bigr ) 
				\notag \\
				&= 2 \bigl ( \mathfrak{a}_0 \, \mathfrak{b}_0 + \mathfrak{a} \cdot \mathfrak{b} \bigr ) \, \id_{\C^2} + 2 \, \bigl ( \mathfrak{a}_0 \, \mathfrak{b} + \mathfrak{b}_0 \, \mathfrak{a} \bigr ) \cdot \sigma 
				. 
				\notag 
			\end{align}
			By a straightforward computation, we can verify that $\projs  \bigl ( 3 (\mathfrak{a} \cdot n) \, (\mathfrak{b} \cdot n ) \bigr ) = \mathfrak{a} \cdot \mathfrak{b}$ holds true and hence 
			\begin{align*}
				\Opspin  \bigl ( 3 (\mathfrak{a} \cdot n) \, (\mathfrak{b} \cdot n ) \bigr ) &= \Opspin \projs  \bigl ( 3 (\mathfrak{a} \cdot n) \, (\mathfrak{b} \cdot n ) \bigr ) 
				= (\mathfrak{a} \cdot \mathfrak{b}) \, \id_{\C^2} 
				. 
			\end{align*}
			This means the right-hand side of 
			\begin{align*}
				2 \, \Opspin \{ a , b \}_{\R^d} &= 2 \, \sum_{j = 1}^d \Opspin \Bigl ( \bigl ( \partial_{p_j}  \mathfrak{a}_0 + \sqrt{3} \, \partial_{p_j} \mathfrak{a} \cdot n \bigr ) \, \bigl ( \partial_{q_j}  \mathfrak{b}_0 + \sqrt{3} \, \partial_{q_j} \mathfrak{b} \cdot n \bigr ) \Bigr ) 
				+ \\
				&\qquad \qquad 
				+ \mbox{other terms}
				\\
				&= 2 \, \sum_{j = 1}^d \Opspin \Bigl ( \bigl ( \partial_{p_j} \mathfrak{a}_0 \, \partial_{q_j} \mathfrak{b}_0 + 3 \, (\partial_{p_j} \mathfrak{a} \cdot n) \, (\partial_{q_j} \mathfrak{b} \cdot n) \bigr ) 
				+ \Bigr . \\
				&\qquad \qquad \qquad \quad \Bigl . + 
				\sqrt{3} \, \bigl ( \partial_{q_j} \mathfrak{b}_0 \, \partial_{p_j} \mathfrak{a} + \partial_{p_j} \mathfrak{a}_0 \, \mathfrak{b} \bigr ) \cdot n \Bigr ) 
				+ \mbox{other terms}
				\\
				&= \sum_{j = 1}^d \bigl ( [\partial_{p_j} A \, , \partial_{q_j} B]_+ - [\partial_{q_j} A \, , \partial_{p_j} B]_+ \bigr ) 
			\end{align*}
			agrees with $\{ A , B \}_{\R^d} - \{ B , A \}_{\R^d}$, and $R(a,b)$ vanishes identically. 
		\end{proof}
		Since, according to \eqref{projsprop}, we can replace $\Opspin a= \Opspin \mathcal{P}a$, it follows that 
		\begin{align*}
			\{\Opspin h_1, &\Opspin a\}_{\R^{2d}} - \{\Opspin a, \Opspin h_1\}_{\R^{2d}} - 2 \, \Opspin \{ h_1, a\}_{\R^{2d}} 
			= \\
			&= R(h_1,a) - \Opspin \{ h_1, (1-\mathcal{P})a\}_{\R^{2d}} 
			= - \Opspin \{ h_1, (1-\mathcal{P})a\}_{\R^{2d}} 
			,
		\end{align*}
		which proves the second equality in (iii).
	\end{enumerate}
\end{proof}

\section{Egorov Theorems}
\label{Egorov}
Suppose $h$ satisfies Assumption~\ref{assumption:hamiltonian_symbol} and denote by $\Phi^{\eps}$ the flow associated to Hamilton's equation of motion 
\begin{align}
	\dot{q} &= + \nabla_p h 
	= + \nabla_p h_0 + \eps \, \bigl ( \nabla_p \mathfrak{h}_0 + \sqrt{3} \, \nabla_p (\mathfrak{h} \cdot n) \bigr ) 
	\label{Egorov:eqn:hamiltonian_eom}
	\\
	\dot{p} &= - \nabla_q h 
	= - \nabla_q h_0 - \eps \, \bigl ( \nabla_q \mathfrak{h}_0 + \sqrt{3} \, \nabla_q (\mathfrak{h} \cdot n) \bigr ) 
	\notag \\
	\dot{n} &= 2 \, \mathfrak{h} \wedge n 
	\notag 
\end{align}
on extended phase space $\Sigma = \R^{2d} \times \Stwo$. To shorten the notation, we will often use $z = (q,p) \in \R^{2d}$ for the translational variables and $\partial_z$ stands for either $\partial_{q_j}$ or $\partial_{p_j}$. Then the flow $\Phi^{\eps}_t = \bigl ( Z^{\eps}(t) , N^{\eps}(t) \bigr )$ similarly splits into a translational part $Z^{\eps}(t) = \bigl ( Q^{\eps}(t) , P^{\eps}(t) \bigr )$ and spin $N^{\eps}(t)$. It is easy to see that under the assumptions placed on $h$, the flow $\Phi^{\eps}$ exists globally in time and is smooth (Proposition~\ref{flow_estimates:prop:existence_flow}). 

If we replace $h$ by the leading-order term $h_0$ in the first two equations of \eqref{Egorov:eqn:hamiltonian_eom}, we obtain another flow $\Phi^0$ that  exists for all $t \in \R$, is smooth and agrees with $\Phi^{\eps}_t = \Phi^0_t + \order(\eps)$ to leading order for all bounded times. We will write $\Phi^0_t = \bigl ( Z^0(t) , N^0(t) \bigr )$ for the translational and spin part. Here, translational and spin dynamics decouple and there is no back-reaction from the spin dynamics onto the translational dynamics, the spin is just “dragged along”.  

Now we have the necessary terminology to prove our semiclassical limits: combining standard Weyl and Stratonovich-Weyl calculus with standard arguments, we obtain an Egorov theorem for times of order $1$. The proofs of the relevant properties of the flow are postponed to Section~\ref{flow_estimates}. 
\begin{theorem}[$\order(1)$-time Egorov-type theorem]\label{theorem:order_1_Egorov}
	Suppose $h$ satisfies $\mathfrak{h}_0 \in S^0$ in addition to Assumption~\ref{assumption:hamiltonian_symbol}. Then for any $a \in S^0_{\Sigma}$ and $T < \infty$, the following two statements hold: 
	\begin{enumerate}[(i)]
		\item $\displaystyle \sup_{t \in [-T,T]} \Bnorm{\e^{+ \ii \hat{h} \frac{t}{\eps}} \, \Opsum(a) \,\e^{-\ii \hat{h} \frac{t}{\eps}} - \Opsum \bigl ( a \circ \Phi^{\eps}_t \bigr )}_{\mathcal{B}(\Hil)} = \order(\eps)$
		\item If in addition $a$ is independent of $n$, then the error is of second order, 
		\begin{align}
			\sup_{t \in [-T,T]} \Bnorm{\e^{+ \ii \hat{h} \frac{t}{\eps}} \, \Opsum(a) \,\e^{-\ii \hat{h} \frac{t}{\eps}} - \Opsum \bigl ( a\circ \Phi^{\eps}_t \bigr )}_{\mathcal{B}(\Hil)} = \order(\eps^2) 
			. 
			\notag 
		\end{align}
	\end{enumerate}
\end{theorem}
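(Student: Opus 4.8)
The proof follows the classical Egorov interpolation scheme, with Lemma~\ref{lemma:commutator_to_Poisson} supplying the gain in powers of $\eps$ and Section~\ref{flow_estimates} supplying uniform symbol bounds along the flow.

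\emph{Step 1 (set-up).} Under Assumption~\ref{assumption:hamiltonian_symbol} and the extra hypothesis $\mathfrak{h}_0\in S^0$, the symbol $\eps\,h_1 = \eps\,(\mathfrak{h}_0\,\id_{\C^2} + \sqrt 3\,\mathfrak{h}\cdot\sigma)$ is a symmetric perturbation of the real symbol $h_0\in S^2$ that is of strictly lower order (bounded scalar part, $\eps\,\mathfrak{h}\cdot\sigma \in \eps\,S^1$), so $\hat h = \Opsum(h)$ is self-adjoint on $D(\hat h_0)$, $\hat h_0 := \Opsum(h_0)$, by Kato--Rellich and $\e^{-\ii\hat h t/\eps}$ is a unitary group; moreover $\Phi^\eps$ and $\Phi^0$ exist globally and are smooth (Proposition~\ref{flow_estimates:prop:existence_flow}). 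Put $B(s) := \Opsum(a\circ\Phi^\eps_s)$, which is a bounded operator by Calderón--Vaillancourt since $a\circ\Phi^\eps_s\in S^0_\Sigma(\eps)$. Because conjugation by unitaries is isometric, setting $G(s) := \e^{-\ii\hat h s/\eps}\,B(s)\,\e^{+\ii\hat h s/\eps}$ — so that $G(0) = \Opsum(a)$, as $\Phi^\eps_0$ is the identity — one has, for every $t$,
\[
\Bnorm{\e^{+\ii\hat h t/\eps}\,\Opsum(a)\,\e^{-\ii\hat h t/\eps} - \Opsum(a\circ\Phi^\eps_t)}_{\mathcal{B}(\Hil)} = \bnorm{G(0) - G(t)}_{\mathcal{B}(\Hil)}.
\]

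\emph{Step 2 (Duhamel, part (i)).} Since $\Phi^\eps$ is the Hamiltonian flow of $h$ for the bracket $\{\cdot,\cdot\}_\Sigma$ — its $\Stwo$-component reproduces $\dot n = 2\,\mathfrak{h}\wedge n$ precisely because $\nabla_n h = \eps\sqrt 3\,\mathfrak{h}$ absorbs the $\tfrac1\eps$ in $\{\cdot,\cdot\}_\Sigma$ — the transported symbol obeys $\tfrac{\D}{\D s}(a\circ\Phi^\eps_s) = \{h,\,a\circ\Phi^\eps_s\}_\Sigma$, and by Section~\ref{flow_estimates} the map $s\mapsto a\circ\Phi^\eps_s$ is $C^1$ into the Fréchet space $S^0_\Sigma(\eps)$ with all seminorms bounded uniformly for $s\in[-T,T]$ and $\eps\in[0,\eps_0)$. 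Hence $G$ is differentiable with
\[
\tfrac{\D}{\D s}G(s) = \e^{-\ii\hat h s/\eps}\Bigl(-\tfrac{\ii}{\eps}\bigl[\hat h, B(s)\bigr] + \Opsum\{h,\,a\circ\Phi^\eps_s\}_\Sigma\Bigr)\e^{+\ii\hat h s/\eps}.
\]
Although the two operators inside the bracket are individually unbounded, Lemma~\ref{lemma:commutator_to_Poisson}(i) applied to $a\circ\Phi^\eps_s\in S^0_\Sigma(\eps)$ states exactly that their difference is a bounded operator of norm $\order(\eps)$, and by the uniform seminorm bounds this holds uniformly in $s\in[-T,T]$. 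The Duhamel identity $G(t) - G(0) = \int_0^t \tfrac{\D}{\D s}G(s)\,\D s$ is valid on the dense domain $D(\hat h_0)$, where every operator is defined and the integrand coincides with the bounded operator just estimated; therefore
\[
\Bnorm{\e^{+\ii\hat h t/\eps}\,\Opsum(a)\,\e^{-\ii\hat h t/\eps} - \Opsum(a\circ\Phi^\eps_t)}_{\mathcal{B}(\Hil)} \le \abs{t}\,\sup_{\abs{s}\le T}\Bnorm{\tfrac{\D}{\D s}G(s)}_{\mathcal{B}(\Hil)} = \order(\eps)
\]
uniformly for $t\in[-T,T]$, which is (i).

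\emph{Step 3 (part (ii)).} If $a$ is independent of $n$, then so is $a\circ\Phi^0_s$, because the translational part of the decoupled flow $\Phi^0$ does not depend on the spin. By Section~\ref{flow_estimates}, $\Phi^\eps_s = \Phi^0_s + \order(\eps)$ with matching bounds on all derivatives, so $a\circ\Phi^\eps_s - a\circ\Phi^0_s = \order(\eps)$ in $S^0_\Sigma(\eps)$, uniformly for $s\in[-T,T]$. Thus $a\circ\Phi^\eps_s$ satisfies the hypothesis of Lemma~\ref{lemma:commutator_to_Poisson}(ii) with $a_0 := a\circ\Phi^0_s$, which upgrades the integrand estimate in Step 2 to $\order(\eps^2)$; repeating the Duhamel bound gives $\order(\eps^2)$ for $t\in[-T,T]$.

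\emph{Main obstacle.} The two uniform facts borrowed from Section~\ref{flow_estimates} are the only nontrivial ingredients: that transporting an $S^0$-symbol along $\Phi^\eps$ keeps every Fréchet seminorm bounded uniformly in $s\in[-T,T]$ and $\eps$, and that $\Phi^\eps_s$ stays within $\order(\eps)$ of the decoupled flow $\Phi^0_s$ in the same uniform sense. Both are proved there by Grönwall estimates for the linearized (variational) equation of the flow on $\Sigma = \R^{2d}\times\Stwo$; the delicate points are controlling the $q$-growth generated by $h_0\in S^2$ and the $\tfrac1\eps$-fast spin precession, which however feeds back into the translational equations only at order $\eps$.
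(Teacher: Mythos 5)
Your proposal is correct and follows essentially the same route as the paper: a Duhamel/interpolation identity between the Heisenberg evolution and the symbol transported along $\Phi^{\eps}$, with the integrand controlled by Lemma~\ref{lemma:commutator_to_Poisson}~(i) (resp.\ (ii), using $a_0 = a\circ\Phi^0_s$ and $\Phi^{\eps}_s = \Phi^0_s + \order(\eps)$) together with the uniform flow estimates of Proposition~\ref{flow_estimates:prop:existence_flow}/Corollary~\ref{corollary:flow_preserves_Hoermander} and Calder\'on--Vaillancourt. The only differences are cosmetic (you conjugate $\Opsum(a\circ\Phi^{\eps}_s)$ backwards in time rather than using $a(t-s)$ as in \eqref{eqn:Duhamel_argument}, and you make the self-adjointness and domain issues explicit), so the argument matches the paper's proof.
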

\begin{proof}
	\begin{enumerate}[(i)]
		\item By Proposition~\ref{flow_estimates:prop:existence_flow}, the flow $\Phi^{\eps}$ exists and all its derivatives are bounded for all $\abs{t} \leq T$. We abbreviate the classically evolved observable with $a(t) := a \circ \Phi^{\eps}_t$ and set $\hat{a} := \Opsum(a)$. Then a Duhamel argument yields 
		\begin{align}
			\E^{\I \hat h \frac{t}{\epsi}}\,\hat{a} \,\E^{-\I \hat h \frac{t}{\epsi}}\;-\; \widehat{a(t)}
			&= \int_0^t \D s \, \tfrac{\D}{\D s} \Bigl ( \E^{\I \frac{s}{\epsi} \hat{h}} \, \widehat{a(t - s)} \, \E^{- \I \frac{s}{\epsi} \hat{h}} \Bigr )
			\notag 
			\\
			&= \int_0^t \D s \, \E^{ \I \frac{s}{\epsi} \hat{h}} \, \biggl ( \tfrac{\I}{\epsi} \bigl [ \hat{h} , \widehat{a(t-s)} \bigr ] 
			+ \tfrac{\D}{\D s} \,\widehat{a(t-s)} \biggr ) \, \E^{- \I \frac{s}{\epsi} \hat{h}} 
			\notag \\
			&= \int_0^t \D s \, \E^{ \I \frac{s}{\epsi} \hat{h}} \, \biggl ( \tfrac{\I}{\epsi} \bigl [ \hat{h} , \widehat{a(t-s)} \bigr ] 
			- \Opsum \bigl \{ h, a(t-s) \bigr \}_\Sigma \biggr ) \, \E^{- \I \frac{s}{\epsi} \hat{h}} 
			.
			\label{eqn:Duhamel_argument}
		\end{align}
		Combining Corollary~\ref{corollary:flow_preserves_Hoermander} with the usual Caldéron-Vaillancourt theorem \cite[Théorème~II~36]{Robert:tour_semiclassique:1987} and Lemma~\ref{lemma:commutator_to_Poisson}~(i), we obtain 
		\begin{align*}
			\sup_{t \in [-T,T]} \norm{\tfrac{\ii}{\eps} \bigl [ \hat{h} , \widehat{a(t-s)} \bigr ] - \Opsum \bigl \{ h, a(t-s) \bigr \}_{\Sigma}}_{\mathcal{B}(\Hil)} = \order(\eps)
		\end{align*}
		and thus we have shown (i). 
		\item If $a(q,p,n) = a(q,p)$, then $a_1(t) := a \circ \Phi^{\eps}_t - a \circ \Phi^0_t = \order(\eps)$ by Proposition~\ref{flow_estimates:prop:existence_flow} where $\Phi^0$ is the decoupled flow introduced in the beginning of the section. Thus Lemma~\ref{lemma:commutator_to_Poisson}~(ii) applies and 
		\begin{align*}
			\sup_{t \in [-T,T]} \norm{\tfrac{\ii}{\eps} \bigl [ \hat{h} , \widehat{a(t-s)} \bigr ] - \Opsum \bigl \{ h, a(t-s) \bigr \}_{\Sigma}}_{\mathcal{B}(\Hil)} = \order(\eps^2)
			.
		\end{align*}
		holds which in turn implies (ii). 
	\end{enumerate}
\end{proof}
Note that (i) just shows that we can replace the skew product flow \eqref{eqn:skew_product_flow} with the Hamiltonian flow $\Phi^{\eps}$ without changing the size of the error. For purely translational observables one can improve the error estimate by a factor of $\eps$ when going from the skew product flow to $\Phi^{\eps}$. 

As an immediate corollary, we obtain a semiclassical limit for states: here we compare $\hat{\rho}(t) = \e^{-\ii \frac{t}{\eps} \hat{h}} \, \hat{\rho} \, \e^{+\ii \frac{t}{\eps} \hat{h}}$ with $\Wignersum(\hat{\rho}) \circ \Phi^{\eps}_{-t}$, where $\Wignersum(\hat{\rho})$ is the Wigner transform of the density operator $\hat{\rho}$. 
\begin{corollary}
	In addition to the conditions of Theorem~\ref{theorem:order_1_Egorov}, assume that $\hat{\rho}$ is a density operator with Wigner transform $W := \Wignersum(\hat{\rho})$. 
	\begin{enumerate}[(i)]
		\item If $a \in S^0_{\Sigma}$ \emph{and} $W$ depend non-trivially on spin, then 
		\begin{align*}
			\mathrm{Tr} \bigl ( \Opsum(a) \, \hat{\rho}(t) \bigr ) &= \frac{1}{(2\pi)^d} \frac{1}{4 \pi} \int_{\R^{2d}} \dd q \, \dd p \int_{\Stwo} \dd n \, a(q,p,n) \, W \circ \Phi^{\eps}_t(q,p,n) + \order(\eps) 
		\end{align*}
		holds for all $\abs{t} \leq T$. 
		\item If $a \in S^0_{\Sigma}$ or $W$ is independent of $n$, then 
		\begin{align*}
			\mathrm{Tr} \bigl ( \Opsum(a) \, \hat{\rho}(t) \bigr ) &= \frac{1}{(2\pi)^d} \frac{1}{4 \pi} \int_{\R^{2d}} \dd q \, \dd p \int_{\Stwo} \dd n \, a(q,p,n) \, W \circ \Phi^{\eps}_t(q,p,n) + \order(\eps^2) 
		\end{align*}
		holds for all $\abs{t} \leq T$. 
	\end{enumerate}
\end{corollary}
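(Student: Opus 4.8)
The plan is to reduce the statement to Theorem~\ref{theorem:order_1_Egorov} via cyclicity of the trace combined with an exact trace identity; no essentially new idea is needed. First, since $\Opsum(a)$ is bounded by Calder\'on--Vaillancourt and $\hat{\rho}$ is trace class, cyclicity lets me move the time evolution onto the observable,
\[
	\mathrm{Tr}\bigl( \Opsum(a) \, \hat{\rho}(t) \bigr) = \mathrm{Tr}\bigl( \e^{+ \ii \hat{h} t / \eps} \, \Opsum(a) \, \e^{- \ii \hat{h} t / \eps} \, \hat{\rho} \bigr) ,
\]
and Theorem~\ref{theorem:order_1_Egorov} then replaces the conjugated operator by $\Opsum\bigl( a \circ \Phi^{\eps}_t \bigr)$ with operator-norm remainder $R(t)$ satisfying $\sup_{\abs{t} \le T} \snorm{R(t)}_{\B(\Hil)} = \order(\eps)$, or $= \order(\eps^2)$ if $a$ is independent of $n$ --- here $a \circ \Phi^{\eps}_t$ is again a symbol in $S^0_\Sigma$ by the flow estimates of Section~\ref{flow_estimates}. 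Because $\hat{\rho} \ge 0$ with $\mathrm{Tr}(\hat{\rho}) = 1$, we have $\abs{\mathrm{Tr}( R(t) \hat{\rho} )} \le \snorm{R(t)}_{\B(\Hil)}$, so, up to an error of the asserted order, it remains to compute $\mathrm{Tr}\bigl( \Opsum(b) \, \hat{\rho} \bigr)$ for $b := a \circ \Phi^{\eps}_t$.

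Next I would record the trace identity for $\Opsum$. On the spin factor, the definitions \eqref{Weyl_S2:eqn:quantization}, \eqref{quantizations:eqn:defn_Wigner_trafo} and cyclicity of $\mathrm{Tr}_{\C^2}$ immediately give $\mathrm{Tr}_{\C^2}\bigl( \Opspin(c) \, M \bigr) = \tfrac{1}{2\pi} \int_{\Stwo} \dd n \, c(n) \, \bigl( \Wignerspin M \bigr)(n)$ for $c \in C^{\infty}(\Stwo)$ and $M \in \B(\C^2)$. Applying this fibrewise to the $\B(\C^2)$-valued symbol $(q,p) \mapsto \Opspin\bigl( b(q,p,\cdot) \bigr)$ and to $M = \bigl( \Wignertrans \hat{\rho} \bigr)(q,p)$, and then the usual Weyl trace formula on $\R^{2d}$, one obtains --- after collecting the normalization constants and using $\Wignersum = \Wignertrans \circ \Wignerspin$ --- that, with $W := \Wignersum(\hat{\rho})$,
\[
	\mathrm{Tr}\bigl( \Opsum(b) \, \hat{\rho} \bigr) = \frac{1}{(2\pi)^d} \, \frac{1}{4\pi} \int_{\R^{2d}} \dd q \, \dd p \int_{\Stwo} \dd n \; b(q,p,n) \, W(q,p,n) .
\]
Inserting $b = a \circ \Phi^{\eps}_t$ and substituting $\Phi^{\eps}_{-t}$ --- legitimate because the Hamiltonian flow on $\Sigma$ preserves the Liouville volume, which is proportional to $\dd q \, \dd p \, \dd n$ since $\eta$ is the normalized volume form on $\Stwo$ --- turns the right-hand side into $\tfrac{1}{(2\pi)^d} \tfrac{1}{4\pi} \int_{\Sigma} a \, \bigl( W \circ \Phi^{\eps}_{-t} \bigr)$, the expression appearing in the formulation above the corollary. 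This proves (i), and proves (ii) whenever $a$ is independent of $n$.

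I expect the only step that needs genuine care to be case (ii) when $W$, but not $a$, is independent of $n$: there the reduction above leaves an $\order(\eps)$ error, and one must show that $\mathrm{Tr}(R(t)\hat{\rho})$ is in fact $\order(\eps^2)$. I would do this by rerunning the Duhamel identity \eqref{eqn:Duhamel_argument}: by the computation in the proof of Lemma~\ref{lemma:commutator_to_Poisson} its integrand equals, up to $\order(\eps^2)$ in operator norm, a constant multiple of $\eps \, \Opsum\{ h_1 , (1-\mathcal{P})( a \circ \Phi^{\eps}_{t-s} ) \}_{\R^{2d}}$, whose symbol lies in $S^0_\Sigma$; evolving this once more by the Egorov theorem (which adds an $\order(\eps)$ error, hence an $\order(\eps^2)$ error overall thanks to the prefactor $\eps$) and applying the trace identity reduces matters to the $\Stwo$-average of $\{ h_1 , (1-\mathcal{P})( a \circ \Phi^{\eps}_{t-s} ) \}_{\R^{2d}} \circ \Phi^{\eps}_s$ paired against $W$. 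Now $\int_{\Stwo} \{ h_1 , (1-\mathcal{P}) f \}_{\R^{2d}}(z,n) \, \dd n = 0$ for every $z$ and every $f$, since $(1-\mathcal{P}) f$ is $L^2(\Stwo)$-orthogonal to $C_1(\Stwo) = \mathrm{span}\{ 1 , n_1 , n_2 , n_3 \}$ and $\{ \, \cdot \, , \cdot \, \}_{\R^{2d}}$ commutes with $\int_{\Stwo} \dd n$; and this vanishing persists up to $\order(\eps)$ under composition with $\Phi^{\eps}_s$, because the spin part of $\Phi^{0}_s$ acts on $\Stwo$, at fixed $(q,p)$, as an orientation-preserving isometry --- it is the flow of the skew-symmetric linear equation $\dot{n} = 2 \, \mathfrak{h} \wedge n$ --- and $\Phi^{\eps}_s = \Phi^{0}_s + \order(\eps)$. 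Since $W$ does not depend on $n$, this contribution is then $\order(\eps^2)$, which finishes (ii). Apart from this $\Stwo$-averaging argument (and its stability under the leading-order flow), everything is just Theorem~\ref{theorem:order_1_Egorov} together with the routine bookkeeping of constants in the trace identity.
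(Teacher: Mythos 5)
Your proposal is correct and, for most of the statement, follows the same route as the paper: cyclicity of the trace, Theorem~\ref{theorem:order_1_Egorov}, the exact trace identity for $\Opsum$ against the Wigner transform, and Liouville's theorem for the volume-preserving flow on $\Sigma$. Where you genuinely diverge is the mixed sub-case of (ii), namely $W$ independent of $n$ but $a$ spin-dependent: the paper disposes of it in one line by exchanging the roles of state and observable --- $\hat{\rho}(t)$ solves the Heisenberg equation backwards in time, so the Egorov theorem is applied to the spin-independent symbol $W$ and the result is paired with $\Opsum(a)$ --- whereas you keep the evolution on the observable throughout, extract the explicit $\order(\eps)$ Duhamel term proportional to $\eps \, \Opsum \{ h_1 , (1-\mathcal{P}) a(t-s) \}_{\R^{2d}}$ from the proof of Lemma~\ref{lemma:commutator_to_Poisson} (which is indeed valid without the polynomial hypotheses of part (iii), since the sub-lemma $R(a,b)=0$ is purely fibrewise), and kill it by the fibrewise $\Stwo$-average: $\partial_z h_1(z,\cdot) \in C_1(\Stwo)$ while $\partial_z (1-\mathcal{P}) f = (1-\mathcal{P}) \partial_z f \perp C_1(\Stwo)$, and the vanishing average survives composition with the flow because the leading-order spin flow is a rotation of the sphere and $\Phi^{\eps}_s = \Phi^0_s + \order(\eps)$. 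This buys you something the paper's sketch leaves implicit: all your error operators are estimated in operator norm and always traced against the trace-class $\hat{\rho}$, so you never need a trace-norm version of the Egorov theorem, which a literal reading of the paper's ``evolve the state'' argument would require. Two small remarks: your phrase that $\{\cdot,\cdot\}_{\R^{2d}}$ ``commutes with $\int_{\Stwo} \dd n$'' should be replaced by the orthogonality statement just given (an average of a product is not a product of averages), although the facts you cite do yield the vanishing; and your computation produces $W \circ \Phi^{\eps}_{-t}$, consistent with the paper's preamble and with its own mechanism, so the $+t$ in the displayed corollary (with $\hat{\rho}(t)$ the forward Schr\"odinger evolution) is an internal sign slip of the paper rather than a defect of your argument. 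Like the paper, your final pairing of the pointwise $\order(\eps)$ error against $W$ tacitly assumes enough integrability of $W$ for the phase-space integral to converge.
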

\begin{proof}
	The proofs of (i) and (ii) rely on Theorem~\ref{theorem:order_1_Egorov}, 
	\begin{align*}
		\mathrm{Tr} \bigl ( \Opsum(a) \, \hat{\rho}(t) \bigr ) &= \mathrm{Tr} \left ( \Opsum(a) \, \e^{- \ii \frac{t}{\eps} \hat{h}} \, \hat{\rho} \, \e^{+ \ii \frac{t}{\eps} \hat{h}} \right ) 
		= \mathrm{Tr} \left ( \e^{+ \ii \frac{t}{\eps} \hat{h}} \, \Opsum(a) \, \e^{- \ii \frac{t}{\eps} \hat{h}} \, \hat{\rho} \right ) 
		, 
	\end{align*}
	the observation that $\hat{\rho}(t)$ and $W \circ \Phi^{\eps}_t$ solve the equations of motion for observables backwards in time and Liouville's theorem.
\end{proof}
Attempting to extend Egorov theorems to longer time scales is hard and requires more detailed information on the flow. The non-linearity of the classical system limits the time scale of validity of semiclassical approximations to the Ehrenfest time $t = \order(\sabs{\ln \eps})$.
While for quadratic Hamiltonians on phase-space $\R^{2d}$ one finds that $\hat{A}(t) = \Op \bigl ( A \circ \Phi^0_t \bigr) $ holds without error since the equations of motion are \emph{linear}, any coupling between translational and spin degrees of freedom introduces additional non-linear terms for the equations of motion on \emph{extended} phase space. However, in the case that $h_0$ is a quadratic polynomial in the components of $p$ and $q$, and the components of $\mathfrak{h}_0$ and $\mathfrak{h}$ are linear polynomials in $q$ and $p$, we can still find bounds on the derivatives of the flow for times of order $\order(\eps^{-\gamma})$ for some $\gamma > 0$ (\cf Proposition~\ref{proposition:long_time_flow_estimates}).  The Jaynes-Cummings- and Rabi-type Hamiltonians which, among other things, describe the interaction between an electromagnetic mode in a cavity and an atomic two-level system, are of this type.
 
A second important ingredient in our long-time semiclassical limit involves using suitable boundedness criteria for $\Psi$DOs: the usual Caldéron-Vaillancourt theorem \cite[Théorème~II~36]{Robert:tour_semiclassique:1987} requries us to control $2d+1$ derivatives where $d$ is the dimension of translational configuration space, \ie there exists a constant $c_d > 0$ depending only on $d$ such that 
\begin{align*}
	\bnorm{\Opsum(a)}_{\mathcal{B}(\Hil)} \leq c_d \, \snorm{a}_{0,2d+1} 
\end{align*}
holds for all $a \in S^0_{\Sigma}$ uniformly in $\eps \in (0,\eps_0)$. To reduce the number of derivatives, we will use other bounds: if $a$ decays sufficiently rapidly at $\infty$, we only need to control $d+1$ derivatives. More specifically, a straightforward generalization of \cite[Lemma~1.1]{Gerard_Leichtnam:ergodic_properties_eigenfunctions:1993} and \cite[Lemma~3.1]{Fermanian-Kammerer_Gerard_Lasser:Wigner_measure_propagation:2012} to matrix-valued symbols gives the following estimate: 
\begin{align}
	\bnorm{\Opsum(a)}_{\mathcal{B}(\Hil)} \leq c_d \, \max_{\abs{\alpha} \leq d+1} \sup_{(q,p,n) \in \Sigma} \Bigl ( \sexpval{p}^{d+1} \, \babs{\partial_p^{\alpha} a(q,p,n)} \Bigr )\,.
	\label{Egorov:eqn:boundedness_criterion_Gerard_Leichtnam}
\end{align}
Again, the constant $c_d > 0$ depends only on $d$ and is uniform in $\eps \in (0,\eps_0)$. Finally, for compactly supported symbols $a \in \Cont^{\infty}_{\mathrm{c}}(\Sigma)$, we can give a bound which only involves the sup norm, 
\begin{align}
	\bnorm{\Opsum(a)}_{\mathcal{B}(\Hil)} \leq \frac{c_d}{\eps^{\nicefrac{d}{2}}} \, \mathrm{Vol} \bigl ( \mathrm{supp} \, a \bigr ) \, \sup_{(q,p,n) \in \Sigma} \babs{a(q,p,n)} 
	\label{Egorov:eqn:boundedness_criterion_compact_support}
	. 
\end{align}
It is the last two boundedness criteria which enter the proof of the long-time semiclassical limit. 
\begin{theorem}[Long-time Egorov theorem]\label{theorem:long_time_Egorov}
	Let $d = 1$, $h_0(q,p) = \frac{1}{2} \bigl ( p^2 + \omega^2 q^2 \bigr )$, $\mathfrak{h}_0 = 0$ and $\mathfrak{h}$ be a linear polynomial in $q$ and $p$. Then for any $T < \infty$, the following statements holds: 
	\begin{enumerate}[(i)]
		\item Suppose $a \in S^{-2}_{\Sigma}$. Then for any $\gamma < \nicefrac{1}{4}$, there is $\eps_0 > 0$ such that for $\eps < \eps_0$ 
		\begin{align}
		\sup_{\abs{t}\in[0,T/\epsi^\gamma ]}	 \Bnorm{\e^{+\ii \frac{t}{\eps} \hat{h}} \, \Opsum(a) \, \e^{-\ii \frac{t}{\eps} \hat{h}} - \Opsum \bigl ( a \circ \Phi^{\eps}_t \bigr )}_{\mathcal{B}(\Hil)} = \order(\eps^{1 - 4 \gamma}) 
			. 
		\end{align}
		\item If in addition $a \in \Cont^{\infty}_{\mathrm{c}}(\Sigma)$ is independent of $n$, then for any $\gamma < \nicefrac{1}{2}$ there is $\eps_0 > 0$ such that for $\eps < \eps_0$ we have 
		\begin{align}
			\sup_{\abs{t}\in[0,T/\epsi^\gamma ]}	 \Bnorm{\e^{+\ii \frac{t}{\eps} \hat{h}} \, \Opsum(a) \, \e^{-\ii \frac{t}{\eps} \hat{h}} - \Opsum \bigl ( a \circ \Phi^{\eps}_t \bigr )}_{\mathcal{B}(\Hil)} = \order(\eps^{\nicefrac{3}{2} - 3 \gamma}) 
			. 
		\end{align}
	\end{enumerate}
\end{theorem}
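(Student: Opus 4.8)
The plan is to re-run the Duhamel computation from the proof of Theorem~\ref{theorem:order_1_Egorov}, but now to exploit that for the Hamiltonians at hand the commutator--Poisson-bracket identity of Lemma~\ref{lemma:commutator_to_Poisson}(iii) holds \emph{without any remainder}, and then to control the resulting \emph{explicit} error term over the long interval $\abs t\le T\eps^{-\gamma}$ using the flow estimates of Proposition~\ref{proposition:long_time_flow_estimates} together with the sharper boundedness criteria \eqref{Egorov:eqn:boundedness_criterion_Gerard_Leichtnam} and \eqref{Egorov:eqn:boundedness_criterion_compact_support}. Writing $a(\tau):=a\circ\Phi^\eps_\tau$, the identity $\tfrac{\D}{\D s}a(t-s)=-\{h,a(t-s)\}_\Sigma$ together with \eqref{eqn:Duhamel_argument} and Lemma~\ref{lemma:commutator_to_Poisson}(iii) collapses the Duhamel integrand to the single explicit term
\begin{align*}
	\e^{+\ii\frac t\eps\hat h}\,\Opsum(a)\,\e^{-\ii\frac t\eps\hat h}-\Opsum\bigl(a\circ\Phi^\eps_t\bigr)=-\tfrac\eps2\int_0^t\D s\;\e^{+\ii\frac s\eps\hat h}\,\Opsum\Bigl(\bigl\{h_1,(1-\projs)(a\circ\Phi^\eps_{t-s})\bigr\}_{\R^2}\Bigr)\,\e^{-\ii\frac s\eps\hat h}.
\end{align*}
Since $\hat h$ is self-adjoint, conjugation by $\e^{\pm\ii s\hat h/\eps}$ is isometric, so
\begin{align*}
	\Bnorm{\e^{+\ii\frac t\eps\hat h}\,\Opsum(a)\,\e^{-\ii\frac t\eps\hat h}-\Opsum\bigl(a\circ\Phi^\eps_t\bigr)}_{\B(\Hil)}\le\tfrac\eps2\int_0^{\abs t}\D s\;\Bnorm{\Opsum\Bigl(\bigl\{h_1,(1-\projs)(a\circ\Phi^\eps_{t-s})\bigr\}_{\R^2}\Bigr)}_{\B(\Hil)},
\end{align*}
and everything reduces to bounding the operator norm of $\Opsum\bigl(\{h_1,(1-\projs)(a\circ\Phi^\eps_\tau)\}_{\R^2}\bigr)$ for $\abs\tau\le T\eps^{-\gamma}$ and integrating in $\tau$.

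Two ingredients control that norm. First, the \emph{defect} $(1-\projs)(a\circ\Phi^\eps_\tau)$ measures the failure of the evolved symbol to be affine in $n$; at $\eps=0$ the spin part of the flow is the position-dependent rotation $n\mapsto O_\tau(q,p)\,n$ and the translational part does not depend on $n$, so $a\circ\Phi^0_\tau$ is affine in $n$ whenever $a$ is. Hence in case (ii), where $a$ is independent of $n$ to begin with, the defect \emph{vanishes identically at $\eps=0$} and is therefore of order $\eps$, with a prefactor that grows only polynomially in $\tau$; in case (i) it is $\order(1)$, again with a polynomially growing prefactor. The precise polynomial-in-$\tau$ bounds for the defect and its $z$-derivatives are exactly what Proposition~\ref{proposition:long_time_flow_estimates} delivers, using the quadratic/linear structure of $h_0$, $\mathfrak h_0$ and $\mathfrak h$ (the linearised flow and its higher $z$-derivatives grow only polynomially). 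Second, to pass from symbols to operators one chooses the appropriate boundedness bound: in case (i) the criterion \eqref{Egorov:eqn:boundedness_criterion_Gerard_Leichtnam}, which costs only $d+1=2$ momentum derivatives but needs the $\sexpval p^{2}$-decay --- this is why $a\in S^{-2}_\Sigma$ is assumed, so that $\{h_1,(1-\projs)(a\circ\Phi^\eps_\tau)\}_{\R^2}$ again lies in $S^{-2}$; in case (ii) the compact-support bound \eqref{Egorov:eqn:boundedness_criterion_compact_support}, which costs a factor $\eps^{-\nicefrac d2}=\eps^{-\nicefrac12}$ but needs only the supremum norm, together with the observation that the unperturbed harmonic flow is a rotation, so $\supp\bigl(a\circ\Phi^\eps_\tau\bigr)$ stays inside a fixed bounded region and its volume is $\order(1)$ uniformly in $\abs\tau\le T\eps^{-\gamma}$.

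Putting the pieces together, the integrand is bounded by $\eps\,\sexpval\tau^{\,k}$ in case (i) and by $\eps^{2}\,\eps^{-\nicefrac12}\,\sexpval\tau^{\,k'}$ in case (ii), with explicit exponents $k,k'$ coming from Proposition~\ref{proposition:long_time_flow_estimates}; the $\tfrac\eps2$ prefactor together with $\int_0^{\abs t}\D s$ turns $\sexpval\tau^{\,\ell}$ into $\sexpval t^{\,\ell+1}\le(T\eps^{-\gamma})^{\ell+1}$, and collecting powers of $\eps$ yields $\order(\eps^{1-4\gamma})$ in case~(i) and $\order(\eps^{\nicefrac32-3\gamma})$ in case~(ii); the hypotheses $\gamma<\nicefrac14$ and $\gamma<\nicefrac12$ are precisely what keeps these exponents positive, equivalently what keeps $T\eps^{-\gamma}$ within the range on which Proposition~\ref{proposition:long_time_flow_estimates} is valid. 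The main obstacle is thus not the Duhamel/calculus part --- which is short once Lemma~\ref{lemma:commutator_to_Poisson}(iii) is available --- but the flow estimates themselves: showing that $\partial_z^\alpha\Phi^\eps_\tau$, and hence the relevant seminorms of $a\circ\Phi^\eps_\tau$ and of its defect, grow only polynomially in $\tau$ up to times of order $\eps^{-\gamma}$, plus the bookkeeping needed to keep the evolved symbol in the class required by the boundedness criterion (the harmonic flow mixes position and momentum, so some care, e.g. via metaplectic covariance of $\e^{-\ii\hat h_0 t/\eps}$, is needed there). Granting Proposition~\ref{proposition:long_time_flow_estimates}, the theorem follows from the estimate sketched above.
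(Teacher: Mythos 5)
Your proposal is correct and takes essentially the same route as the paper: Duhamel plus Lemma~\ref{lemma:commutator_to_Poisson}(iii) to reduce everything to the explicit remainder $\{h_1,(1-\projs)(a\circ\Phi^{\eps}_\tau)\}_{\R^{2d}}$, then the weighted bound \eqref{Egorov:eqn:boundedness_criterion_Gerard_Leichtnam} together with the cubic-in-$t$ derivative estimates of Proposition~\ref{proposition:long_time_flow_estimates} in case (i), and in case (ii) the observation that $a\circ\Phi^0_t$ is annihilated by $1-\projs$, so the defect is $\order(\eps)$ with only polynomially growing prefactor, combined with the compact-support bound \eqref{Egorov:eqn:boundedness_criterion_compact_support} and the closeness estimates $Z^{\eps}-Z^0=\order(\eps\abs{t})$, $\partial_z(Z^{\eps}-Z^0)=\order(\eps\abs{t}^2)$. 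The only cosmetic deviation is that you control the support volume in (ii) via the near-rotational character of the flow, whereas the paper invokes Liouville's theorem; both serve the same purpose.
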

\begin{proof}
	\begin{enumerate}[(i)]
		\item Assume $t = \order(\eps^{-\gamma})$ for some $\gamma \geq 0$ that has yet to be determined. Moreover, let $b , g > 0$ be the constants defined in the proof of Proposition~\ref{proposition:long_time_flow_estimates} and $\alpha = \nicefrac{1}{2}$. Lastly, for better readability, define the remainder 
		\begin{align*}
			R \bigl ( a(t) \bigr ):=- \Opspin \bigl \{ h_1, (1-\mathcal{P}) a(t) \bigr \}_{\R^{2d}} 
		\end{align*}
		which involves first-order derivatives of $a(t)$ in $q$ and $p$. Then the chain rule 
		%
		\begin{align*}
			 \partial_z \bigl ( a \circ \Phi^{\eps}_t \bigr )(z,n) = D a \bigl ( \Phi^{\eps}_t(z,n) \bigr ) \, \partial_z \Phi^{\eps}_t(z,n)
		\end{align*}
		and analogous expressions for higher-order derivatives, equation~\eqref{eqn:Duhamel_argument}, Lemma~\ref{lemma:commutator_to_Poisson}~(iii) and boundedness criterion~\eqref{Egorov:eqn:boundedness_criterion_Gerard_Leichtnam} imply 
		\begin{align*}
			\sup_{\abs{t}\in[0,T/\epsi^\gamma ]}	&\Bnorm{\e^{+\ii \frac{t}{\eps} \hat{h}} \, \Opsum(a) \, \e^{-\ii \frac{t}{\eps} \hat{h}} - \Opsum \bigl ( a \circ \Phi^{\eps}_t \bigr )}_{\mathcal{B}(\Hil)} 
			\leq 
			\\
			&\qquad \qquad \leq \eps \, \int_0^t \dd s \, \Bnorm{\Optrans \Bigl ( R \bigl ( a(t) \bigr ) \Bigr )}_{\mathcal{B}(\Hil)} 
			\\
			&\qquad \qquad \leq \eps \, \abs{t} \, C \, \max_{\abs{\alpha \leq 3}} \, \sup_{(z,n) \in \Sigma} \Bigl ( \sexpval{p}^2 \bnorm{\partial_z^{\alpha} \bigl ( a \circ \Phi^{\eps}_t \bigr )(z,n)} \Bigr ) 
			. 
		\end{align*}
		Proposition~\ref{proposition:long_time_flow_estimates} allows us to estimate the right-hand side by 
		\begin{align*}
			\max_{\abs{\alpha} \leq 3} \, \sup_{(z,n) \in \Sigma} \Bigl ( \sexpval{p}^2 \bnorm{\partial_z^{\alpha} \bigl ( a \circ \Phi^{\eps}_t \bigr )(z,n)} \Bigr ) \leq C \, \bigl ( 1 + \abs{t}^3 \bigr )
		\end{align*}
		for $\abs{t} \leq \eps^{- \nicefrac{1}{2}} \, \sqrt{2 b g}^{\, -1}$. Hence, for $t = \order(\eps^{- \gamma})$, $\gamma < \nicefrac{1}{4}$, we obtain the bound 
		\begin{align*}
			\sup_{\abs{t}\in[0,T/\epsi^\gamma ]}	 &\Bnorm{\e^{+\ii \frac{t}{\eps} \hat{h}} \, \Opsum(a) \, \e^{-\ii \frac{t}{\eps} \hat{h}} - \Opsum \bigl ( a \circ \Phi^{\eps}_t \bigr )}_{\mathcal{B}(\Hil)} 
			\leq \eps \, C \, \bigl ( \abs{t} + \abs{t}^4 \bigr ) 
			= \order(\eps^{1 - 4 \gamma})
			. 
		\end{align*}
		\item Again let $t = \order(\eps^{-\gamma})$ for some $\gamma \geq 0$ that has yet to be determined. In case $a$ is initially independent of $n$ and has compact support, we have $a(t,z,n) = a \bigl ( Z^{\eps}(t,z,n) \bigr )$ and thus 
		\begin{align*}
			\partial_z a(t,z,n) = D a \bigl ( Z^{\eps}(t,z,n) \bigr ) \, \partial_z Z^{\eps}(t,z,n) 
			. 
		\end{align*}
		Since the volume measure on phase space coincides with the Liouville measure, the above equation combined with Liouville's Theorem 
		implies that for all $t \in \R$, we have 
		\begin{align*}
			\mathrm{Vol} \bigl ( \mathrm{supp} \, \partial_z a(t) \bigr ) \leq \mathrm{Vol} \bigl ( \mathrm{supp} \, a(t) \bigr ) 
			= \mathrm{Vol} \bigl ( \mathrm{supp} \, a \bigr ) 
			. 
		\end{align*}
		The function $a_0(t) := a \circ \Phi^0_t$ is independent of $n$ and thus $\projs a_0(t) = a_0(t)$ holds for all $t$. This means, we can insert $a_0(t)$ into the remainder and estimate its norm for times $\abs{t} \leq \eps^{-\nicefrac{1}{2}} \, \sqrt{2 b g}^{\, -1}$ using equation~\eqref{Egorov:eqn:boundedness_criterion_compact_support} and Proposition~\ref{proposition:long_time_flow_estimates}, 
		\begin{align*}
			\eps \, &\abs{t} \, \sup_{\abs{s} \in [0,t]} \Bnorm{\Optrans \Bigl ( R \bigl ( a(t) - a_0(t) \bigr ) \Bigr )}_{\mathcal{B}(\Hil)} 
			\leq 
			\\
			&\leq \eps \, \abs{t} \, \frac{C}{\sqrt{\eps}} \, \max_{\abs{\alpha} \leq 1}\sup_{\abs{s} \in [0,t]} \sup_{(z,n) \in \Sigma} \Bigl ( 
			\babs{Da \bigl ( Z^{\eps}(s,z,n) \bigr ) - D a \bigl ( Z^0(s,z) \bigr )} \, \babs{\partial_z^{\alpha} Z(s,z,n)} 
			+ \Bigr . \\
			&\qquad \qquad \qquad \qquad \qquad \qquad \quad \;
			+ \babs{D a \bigl ( Z^0(s,z) \bigr )} \, \babs{\partial_z^{\alpha} \bigl ( Z^{\eps}(s,z,n) - Z^0(s,z) \bigr )} 
			\Bigr )
			\\
			&\leq \sqrt{\eps} \, C \, \abs{t} \, \bigl ( \eps \abs{t} + \eps \, b g \abs{t}^3 \bigr ) 
			= \order(\eps^{\nicefrac{3}{2} - 3 \gamma})
			. 
		\end{align*}
		Hence, as long as $t = \order(\eps^{- \gamma})$ for $\gamma < \nicefrac{1}{2}$, the right-hand side goes to $0$ as $\eps \to 0$ and we have shown (ii). 
	\end{enumerate}
\end{proof}
In the special case of a quadratic Hamiltonian we can actually get an exact equation for the time-evolution of the symbol of a semiclassical operator, albeit not in terms of a classical flow.
\begin{theorem}\label{exact}
	Let $d=1$ and $h_0(q,p) = \frac{1}{2}(p^2 + \omega^2 q^2)$, $\mathfrak{h}_0=0$ and $\mathfrak{h}$ be a polynomial in $q$ and $p$ of degree $1$. Let $a\in S^{-2}_\Sigma$ and $a(t,q,p,n)$ be a solution of 
	\begin{align*}
		\partial_t a(t,q,p,n) = \bigl ( \mathcal{P} \, \{h, a(t) \}_\Sigma \bigr ) \,(q,p,n) 
	\end{align*}
	such that $a(t) \in S^{0}_\Sigma$ for all $t \in \R$. Then
	\begin{align}\label{egorov3}
		 \E^{\I \hat h \frac{t}{\epsi}} \, \hat{a} \, \E^{-\I \hat h \frac{t}{\epsi}} =  \Opsum \bigl ( a(t) \bigr )
		.
	\end{align}
\end{theorem}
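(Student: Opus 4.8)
The plan is to verify directly that the family of bounded operators
\[
	U(t) := \E^{-\I \hat{h} \frac{t}{\eps}} \, \Opsum \bigl ( a(t) \bigr ) \, \E^{+\I \hat{h} \frac{t}{\eps}}
\]
is constant in $t$; since $U(0) = \Opsum(a(0)) = \hat{a}$ this is equivalent to \eqref{egorov3}. (We read $a(0) = a$ in the statement; as $\Opsum$ depends on its argument only through $\mathcal{P}a$, we take $a = \mathcal{P}a$, i.e.\ $a$ is its own $C_1(\Stwo)$-representative.) The operator $\hat{h} = \Opsum(h)$ is self-adjoint and generates $\E^{-\I \hat{h} \frac{t}{\eps}}$, its principal part being the harmonic oscillator and $\eps \, \Opsum(h_1)$ a first-order, hence relatively bounded, perturbation; consequently $\E^{\pm \I \hat{h} \frac{t}{\eps}}$ commutes with $\hat{h}$, and differentiating $U(t)$ yields
\begin{align*}
	\tfrac{\D}{\D t} U(t) = \E^{-\I \hat{h} \frac{t}{\eps}} \, \Bigl ( \tfrac{\D}{\D t} \Opsum \bigl ( a(t) \bigr ) - \tfrac{\I}{\eps} \bigl [ \hat{h} , \Opsum \bigl ( a(t) \bigr ) \bigr ] \Bigr ) \, \E^{+\I \hat{h} \frac{t}{\eps}} .
\end{align*}
It therefore suffices to show that the operator in parentheses vanishes for every $t$.

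Two inputs do this. First, the evolution equation for $a(t)$ and the identity $\Opsum \circ \mathcal{P} = \Opsum$ — which follows pointwise in $(q,p)$ from \eqref{projsprop} since $\Opsum = \Optrans \circ \Opspin$ — give
\begin{align*}
	\tfrac{\D}{\D t} \Opsum \bigl ( a(t) \bigr ) = \Opsum \bigl ( \partial_t a(t) \bigr ) = \Opsum \bigl ( \mathcal{P} \, \{ h , a(t) \}_\Sigma \bigr ) = \Opsum \bigl ( \{ h , a(t) \}_\Sigma \bigr ) .
\end{align*}
Second, Lemma~\ref{lemma:commutator_to_Poisson}~(iii) applies — $h_0$ is quadratic and $\mathfrak{h}_0 = 0$, $\mathfrak{h}$ are linear — and gives the \emph{exact} identity
\begin{align*}
	\tfrac{\I}{\eps} \bigl [ \hat{h} , \Opsum \bigl ( a(t) \bigr ) \bigr ] = \Opsum \bigl ( \{ h , a(t) \}_\Sigma \bigr ) - \tfrac{\eps}{2} \, \Opsum \bigl ( \{ h_1 , (1 - \mathcal{P}) a(t) \}_{\R^{2d}} \bigr ) .
\end{align*}
The correction term vanishes identically: since $\mathcal{P}^2 = \mathcal{P}$, differentiating $\mathcal{P}a(t)$ and using the evolution equation gives $\partial_t ( \mathcal{P} a(t) ) = \mathcal{P} \, \partial_t a(t) = \mathcal{P} \{ h , a(t) \}_\Sigma = \partial_t a(t)$, so $\mathcal{P}a(t) - a(t)$ is constant, and, vanishing at $t = 0$ because $a = \mathcal{P}a$, it vanishes for all $t$; hence $(1-\mathcal{P}) a(t) = 0$. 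Subtracting the two displays gives $\tfrac{\D}{\D t} U(t) = 0$, so $U(t) \equiv \hat{a}$, i.e.\ $\Opsum(a(t)) = \E^{\I \hat{h} \frac{t}{\eps}} \, \hat{a} \, \E^{-\I \hat{h} \frac{t}{\eps}}$, which is \eqref{egorov3}.

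The step needing care is the rigorous justification of the differentiation of $U(t)$: one wants $t \mapsto \Opsum(a(t))$ to be a family in $\B(\Hil)$ that is $C^1$ in operator norm, and the product rule for $\E^{-\I \hat{h} \frac{t}{\eps}} \, \Opsum(a(t)) \, \E^{+\I \hat{h} \frac{t}{\eps}}$ to hold. This is precisely where the standing hypothesis $a(t) \in S^0_\Sigma$ enters: together with norm-continuity of $\Opsum \colon S^0_\Sigma \to \B(\Hil)$ (Calderón--Vaillancourt) and the fact that $\partial_t a(t) = \mathcal{P}\{h,a(t)\}_\Sigma$ involves only one phase-space derivative of $a(t)$ and multiplication by the polynomial coefficients of $h$, it yields that $t \mapsto a(t)$ is $C^1$ into a fixed symbol seminorm and hence that $t \mapsto \Opsum(a(t))$ is $C^1$ in operator norm. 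The genuinely substantive content, by contrast, is already contained in Lemma~\ref{lemma:commutator_to_Poisson}~(iii) — the $\Or(\eps^3)$ remainder in \eqref{weylcommu} and the vanishing of the remainder $R$ on $C_1(\Stwo)$-valued symbols — combined with the elementary fact that the evolution preserves $C_1(\Stwo)$; the remainder of the argument is the standard Duhamel/uniqueness bookkeeping.
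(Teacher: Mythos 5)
Your proposal is correct and follows essentially the same route as the paper: the exact commutator identity of Lemma~\ref{lemma:commutator_to_Poisson}~(iii) with the correction term killed because $(1-\mathcal{P})a(t)=0$, together with $\Opsum\circ\mathcal{P}=\Opsum$, shows that $\Opsum\bigl(a(t)\bigr)$ satisfies the Heisenberg equation exactly, and the constancy of $U(t)$ is just the Duhamel step the paper leaves implicit. Your explicit remark that the initial datum is taken as its own $C_1(\Stwo)$-representative (so that $(1-\mathcal{P})a(t)$ vanishes for all $t$) is exactly what the paper compresses into ``by definition $a(t,q,p,\cdot)\in C_1(\Stwo)$''.
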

\begin{proof}
	By definition $a(t,q,p,\cdot)\in C_1(\Stwo)$ for all $t\in\R$. Thus by Lemma~\ref{lemma:commutator_to_Poisson} (iii) we have
	\begin{align*}
		\tfrac{\I}{\epsi} \bigl [ \Opsum h, \Opsum a(t) \bigr ] = \Opsum \{h,a(t)\}_\Sigma =\Opsum \mathcal{P} \{h,a(t)\}_\Sigma 
		= \Opsum \bigl ( \dot a(t) \bigr )
		.
	\end{align*}
\end{proof}

\section{Estimates on the classical flow}\label{flow}
\label{flow_estimates}
In this section, we study properties of the flows $\Phi^{\eps}$ and $\Phi^0$ as defined in the beginning of Section~\ref{Egorov}: $\Phi^{\eps}_t = \bigl ( Z^{\eps}(t) , N^{\eps}(t) \bigr )$ is the Hamiltonian flow associated to \eqref{Egorov:eqn:hamiltonian_eom} while $\Phi^0_t = \bigl ( Z^0(t) , N^0(t) \bigr )$ is the flow associated to \eqref{Egorov:eqn:hamiltonian_eom} after replacing $h$ by $h_0$ in the first two equations. Existence for all times and smoothness follows from standard arguments from the theory of ordinary differential equations. 
\begin{proposition}\label{flow_estimates:prop:existence_flow}
	Suppose $h$ satisfies $\mathfrak{h} \in S^0$ in addition to Assumption~\ref{assumption:hamiltonian_symbol}. Then $\Phi^{\eps}$ and $\Phi^0$ exist globally in time and are smooth, \ie for any $t \in \R$, $\Phi^{\eps}$ and $\Phi^0$ are diffeomorphisms on $\Sigma$ and depend smoothly on time. Moreover, for bounded time intervals $[-T,+T]$, we have $\Phi^{\eps}_t = \Phi^0_t + \order(\eps)$ and all partial derivatives $\partial_z^{\alpha} \partial_n^{\beta} \Phi^{\eps}_t$ and $\partial_z^{\alpha} \partial_n^{\beta} \Phi^0_t$ are uniformly bounded. 
\end{proposition}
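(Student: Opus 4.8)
The plan is to treat the flow equations~\eqref{Egorov:eqn:hamiltonian_eom} as a smooth ODE on the manifold $\Sigma = \R^{2d} \times \Stwo$ and apply the standard existence-uniqueness-and-smooth-dependence theory, with a global-in-time a priori estimate to rule out blow-up. First I would observe that under Assumption~\ref{assumption:hamiltonian_symbol} together with the extra hypothesis $\mathfrak{h} \in S^0$, the right-hand side of~\eqref{Egorov:eqn:hamiltonian_eom} is a $C^\infty$ vector field on $\Sigma$: indeed $\nabla_p h_0 \in S^1$ and $\nabla_q h_0 \in S^1$ grow at most linearly in $(q,p)$, while $\mathfrak{h}_0, \mathfrak{h}$ and their derivatives are bounded, so the $\eps$-correction terms in the $\dot q, \dot p$ equations are bounded with bounded derivatives, and the spin equation $\dot n = 2 \mathfrak{h} \wedge n$ is tangent to $\Stwo$ (since $\tfrac{\D}{\D t}\abs{n}^2 = 2 n \cdot (2\mathfrak{h}\wedge n) = 0$) and smooth. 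Thus local existence and uniqueness of $\Phi^\eps_t$ and $\Phi^0_t$ is immediate from Picard-Lindelöf, and smooth dependence on initial data and on $t$ follows from the classical theorem on smooth dependence of solutions of ODEs on parameters.

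For global existence I would give an a priori bound on $\abs{Z^\eps(t)}$. On the compact factor $\Stwo$ there is nothing to prove, so only the translational part can escape to infinity. Writing $v(t) := \abs{Z^\eps(t)}^2$ and using that $h_0 \in S^2$ forces $\abs{\nabla h_0(z)} \leq C(1 + \abs{z})$ while the $\eps$-terms are bounded by a constant $C'$ uniformly in $n$ and $\eps \leq \eps_0$, one gets $\tfrac{\D}{\D t} v(t) \leq C'' (1 + v(t))$, whence Grönwall's inequality yields $\abs{Z^\eps(t)} \leq (1 + \abs{z}) \e^{C'' \abs{t}/2}$; in particular the solution cannot leave every compact set in finite time, so it exists for all $t \in \R$. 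The same argument with $\eps = 0$ handles $\Phi^0$. Since the flow map is defined for all time and is smooth with smooth inverse $\Phi^\eps_{-t}$, it is a diffeomorphism of $\Sigma$ for each fixed $t$.

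It remains to establish, on a fixed bounded interval $[-T,T]$, the estimate $\Phi^\eps_t = \Phi^0_t + \order(\eps)$ and the uniform boundedness of all partial derivatives $\partial_z^\alpha \partial_n^\beta \Phi^\eps_t$ and $\partial_z^\alpha \partial_n^\beta \Phi^0_t$. For the first claim I would subtract the two ODEs, note that the difference of the right-hand sides is $\order(\eps)$ plus a Lipschitz-in-$(\Phi^\eps - \Phi^0)$ term (the Lipschitz constant being controlled on $[-T,T]$ by the a priori bound above, which confines the trajectories to a compact set), and apply Grönwall once more. For the derivative bounds I would differentiate the flow equation in the initial conditions: the Jacobian $J(t) := D_{(z,n)}\Phi^\eps_t$ satisfies the linear variational equation $\dot J(t) = DX(\Phi^\eps_t)\, J(t)$ with $J(0) = \id$, and since the trajectories stay in a compact set on $[-T,T]$ the matrix $DX(\Phi^\eps_t)$ is uniformly bounded there, so $\norm{J(t)} \leq \e^{C T}$ by Grönwall; the higher derivatives satisfy analogous linear equations with inhomogeneities that are polynomial in lower-order derivatives, and an induction on the order of differentiation closes the argument. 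The main point requiring care --- and the one I would flag as the genuine obstacle --- is the interplay between the unbounded linear growth coming from $h_0 \in S^2$ and the need for \emph{uniform} (in $z$, $n$, and $\eps$) control: one must check that the linear, rather than super-linear, growth of $\nabla h_0$ is exactly what makes all the Grönwall constants independent of the base point, since a genuinely nonlinear $h_0$ would destroy the global-in-$z$ uniformity (this is the same mechanism that later restricts Theorem~\ref{theorem:long_time_Egorov} to the quadratic case). Everything else is routine ODE theory, so I would not belabor the computations.
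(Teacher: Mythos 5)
Your proposal is correct and follows essentially the same route as the paper, whose proof is just the compressed version of your argument: the right-hand side of \eqref{Egorov:eqn:hamiltonian_eom} has globally bounded derivatives (to all orders) under the stated hypotheses, which gives global existence, smoothness, and uniform bounds on $\partial_z^{\alpha}\partial_n^{\beta}\Phi^{\eps}_t$ via the variational equations, while Grönwall applied to the difference of the two vector fields, which is bounded and $\order(\eps)$, yields $\Phi^{\eps}_t = \Phi^0_t + \order(\eps)$ on $[-T,T]$. The only point to streamline is that the uniformity in the initial point $(z,n)$ should be drawn directly from this global boundedness of the derivatives of the vector field (as you yourself note in your closing remark), rather than from confining trajectories to a compact set, since that compact set depends on the initial condition and would a priori only give locally uniform constants.
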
 
\begin{proof}
	All claims follow from the fact that the vector fields (r.h.s.~of \eqref{Egorov:eqn:hamiltonian_eom}) have globally bounded derivatives to all orders. Moreover, the difference between the vector fields which define $\Phi^{\eps}$ and $\Phi^0$ is bounded and $\order(\eps)$, and thus the Grönwall lemma implies $\Phi^{\eps}_t = \Phi^0_t + \order(\eps)$ for bounded time intervals. 
\end{proof}
This immediately implies that the actions of $\Phi^{\eps}$ and $\Phi^0$ preserve Hörmander classes. 
\begin{corollary}\label{corollary:flow_preserves_Hoermander}
	Let $h$ satisfy Assumption~\ref{assumption:hamiltonian_symbol} and, in addition, let $\mathfrak{h} \in S^0$. Then for $a \in S_\Sigma^k(\epsi)$ it holds that $a \circ \Phi^{\eps}_t , a \circ \Phi^0_t \in S_\Sigma^k(\eps)$ for all $t \in \R$ and for any $T < \infty$ and $r\in\N_0$
	\begin{align*}
		\sup_{t \in [-T,T]} \sup_{\eps \in (0,\eps_0)} \bnorm{a \circ \Phi^{\eps}_t}_{k,r} < \infty 
		\qquad \mbox{and}\qquad
		\sup_{t \in [-T,T]} \sup_{\eps \in (0,\eps_0)} \bnorm{a \circ \Phi^0_t}_{k,r} < \infty 
		. 
	\end{align*}
\end{corollary}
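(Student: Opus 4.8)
The plan is to reduce the claim to the uniform bounds on the flow maps furnished by Proposition~\ref{flow_estimates:prop:existence_flow} via a direct chain-rule/Fa\`a-di-Bruno argument. First I would recall that membership $a \circ \Phi^{\eps}_t \in S^k_\Sigma(\eps)$ means, by definition, $\Opspin(a\circ\Phi^{\eps}_t) \in S^k(\eps)$; since $\Opspin$ acts only on the $n$-variable and the Stratonovich-Weyl quantization is a fixed bounded linear map $C^\infty(\Stwo)\to\B(\C^2)$, it suffices to control the seminorms $\norm{a\circ\Phi^{\eps}_t}_{k,r}$ for the $\B(\C^2)$-valued symbol, i.e.\ the weighted sup-norms of $\partial_p^\alpha\partial_q^\beta$ of the composed function with $\abs\alpha+\abs\beta\le r$, uniformly for $\abs t\le T$ and $\eps\in(0,\eps_0)$. (The spin derivatives $\partial_n$ are harmless since $\Stwo$ is compact and, after projecting with $\mathcal P$, one only ever sees linear polynomials in $n$; this is where the $\mathfrak h\in S^0$ hypothesis matters, ensuring the spin block contributes no growth in $z$.)

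Next I would expand, by the chain rule, a general derivative $\partial_z^{\alpha}\partial_n^{\beta}(a\circ\Phi^{\eps}_t)$ as a finite sum of terms of the form $(\partial^{\mu}_z\partial^{\nu}_n a)\circ\Phi^{\eps}_t$ multiplied by products of partial derivatives $\partial_z^{\gamma}\partial_n^{\delta}\Phi^{\eps}_t$ with total order $\le\abs\alpha+\abs\beta$. By Proposition~\ref{flow_estimates:prop:existence_flow} all these flow derivatives are bounded uniformly on $[-T,T]\times(0,\eps_0)$, so each such term is bounded by a constant times $\bnorm{(\partial_z^{\mu}\partial_n^{\nu} a)\circ\Phi^{\eps}_t}_{\B(\C^2)}$ evaluated at the image point. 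Since $a\in S^k_\Sigma$ (or $S^k_\Sigma(\eps)$ with uniform seminorms), $\abs{(\partial_z^{\mu}\partial_n^{\nu}a)(z',n')}\le \snorm{a}_{k,r}\,\sexpval{p'}^{k-\abs{\mu_p}}$ where $\mu_p$ counts the $p$-derivatives; combining this with the additional $\partial_p$-weights coming from differentiating in $p$ through the flow gives the correct weight $\sexpval{p'}^{k-\abs{\alpha_p}}$ once one checks the bookkeeping of how many $p$-derivatives survive. The one genuinely necessary estimate is that the weight at the image point is comparable to the weight at the base point: $\sexpval{P^{\eps}_t(z,n)}\le C_T\,\sexpval{p}$ with $C_T$ uniform in $\eps$. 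This follows because, under Assumption~\ref{assumption:hamiltonian_symbol} with $\mathfrak h\in S^0$, the vector field for $\dot p$ is bounded, so $\abs{P^{\eps}_t(z,n)-p}\le C_T$; hence $\sexpval{P^{\eps}_t(z,n)}\le 1+\abs{p}+C_T\le (1+C_T)\sexpval p$, and similarly the reverse inequality, giving two-sided comparability on $[-T,T]$.

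Putting these together, $\sexpval{p}^{-k+\abs{\alpha_p}}\,\bnorm{\partial_z^\alpha\partial_n^\beta(a\circ\Phi^{\eps}_t)(z,n)}$ is bounded by a finite sum of terms each $\le C\,\snorm{a}_{k,\abs\alpha+\abs\beta}$, with $C$ depending only on $T$, $d$, the orders, and the flow-derivative bounds from Proposition~\ref{flow_estimates:prop:existence_flow} --- in particular \emph{not} on $\eps$ or on $(z,n)$ or on $t\in[-T,T]$. Taking the supremum over $(z,n)$ and the maximum over $\abs\alpha+\abs\beta\le r$ yields $\sup_{t\in[-T,T]}\sup_{\eps\in(0,\eps_0)}\bnorm{a\circ\Phi^{\eps}_t}_{k,r}<\infty$, and the identical argument with $\Phi^0_t$ in place of $\Phi^{\eps}_t$ (whose derivatives are likewise uniformly bounded by the same Proposition) gives the second bound. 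The main obstacle is nothing deep --- it is the careful bookkeeping in the chain-rule expansion to verify that differentiating in $p$ really does produce the extra decay in $p$ demanded by the $S^k$ seminorm, i.e.\ that composing with a flow whose $p$-component has bounded $z$-derivatives does not destroy the symbol-class structure; the weight-comparability estimate $\sexpval{P^{\eps}_t}\simeq\sexpval p$ is what makes this work, and it is exactly the place where the finiteness of the time interval $[-T,T]$ is used.
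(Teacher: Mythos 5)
Your overall route --- chain rule/Fa\`a di Bruno for $\partial_z^\alpha\partial_n^\beta(a\circ\Phi^{\eps}_t)$ combined with the uniform bounds on all derivatives of the flow from Proposition~\ref{flow_estimates:prop:existence_flow} --- is exactly the argument the paper has in mind (the paper offers no further proof, declaring the corollary immediate from that proposition), and it does deliver what is actually used later, namely uniform bounds on finitely many derivatives of $a\circ\Phi^{\eps}_t$ on $[-T,T]\times(0,\eps_0)$, which is what feeds into Calder\'on--Vaillancourt in Theorem~\ref{theorem:order_1_Egorov}. However, the two points you single out as the substance of the proof do not hold up as stated. First, the weight comparability $\sexpval{P^{\eps}_t(z,n)}\leq C_T\sexpval{p}$ is justified by claiming that the $\dot p$-component of the vector field is bounded; but Assumption~\ref{assumption:hamiltonian_symbol} only gives $h_0\in S^2$, hence $\abs{\nabla_q h_0}\lesssim\sexpval{p}^2$ (the extra hypothesis $\mathfrak h\in S^0$ only bounds the $\order(\eps)$ part of \eqref{Egorov:eqn:hamiltonian_eom} and the spin equation, not $\nabla_q h_0$). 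So $\abs{P^{\eps}_t-p}\leq C_T$ is not a consequence of the stated hypotheses; it would require an additional structural assumption on $h_0$ (e.g.\ $\nabla_q h_0$ bounded, as for $h_0=\tfrac12 p^2+V(q)$).

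Second, and more seriously, even granting $\sexpval{P^{\eps}_t}\simeq\sexpval{p}$, the ``bookkeeping'' you defer does not close for the anisotropic seminorms $\norm{\cdot}_{k,r}$ with $r\geq 1$: a derivative $\partial_p$ of the composition can fall on $Q^{\eps}_t$, whose $p$-derivative is bounded but has no decay in $p$, and thereby turns into a $q$-derivative of $a$, which gains no factor $\sexpval{p}^{-1}$. The resulting term is of size $\sexpval{P^{\eps}_t}^{k}\simeq\sexpval{p}^{k}$ rather than the required $\sexpval{p}^{k-\abs{\alpha}}$. Already the free flow $\Phi_t(q,p)=(q+tp,p)$ with $a=\chi(q)$, $\chi\in\Cont_{\mathrm c}^\infty$, gives $\partial_p(a\circ\Phi_t)=t\,\chi'(q+tp)$, which is bounded but does not decay in $p$, so the $p$-gaining structure of $S^k$ is genuinely destroyed by flows mixing $q$ into the $p$-dependence; no weight-comparability estimate can repair this. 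In other words, your argument proves (and the paper uses) uniform boundedness of all $z$- and $n$-derivatives of $a\circ\Phi^{\eps}_t$ and $a\circ\Phi^0_t$ on bounded time intervals, but it does not --- and cannot, by this mechanism --- establish the literal claim that the weighted seminorms $\norm{a\circ\Phi^{\eps}_t}_{k,r}$ with the decay gain in $p$ are preserved; that stronger statement is where your proposal has a gap (one which the paper's one-line justification silently shares).
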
 
Now let us turn to the case of Jaynes-Cummings- and Rabi-type Hamiltonians. If $\mathfrak{h}$ is the prefactor of $h_1 = \sqrt{3} \, \mathfrak{h} \cdot n$, we define the skew symmetric matrix
\begin{align*}
	\mathfrak{H} := 2 \, \left (
	\begin{matrix}
		0 & - \mathfrak{h}_3 & + \mathfrak{h}_2 \\
		+ \mathfrak{h}_3 & 0 & - \mathfrak{h}_1 \\
		- \mathfrak{h}_2 & + \mathfrak{h}_1 & 0 \\
	\end{matrix}
	\right ) 
	. 
\end{align*}
This convention allows us to write down proofs of the long-time flow estimates in a more compact fashion. 
\begin{proposition}\label{proposition:long_time_flow_estimates}
	Let $d = 1$ and 
	\begin{align*}
		h_0(q,p) = \tfrac{1}{2} \bigl (p^2 + \omega^2 q^2 \bigr )
		\, , \qquad 
		h_1(q,p,n) = \sqrt{3} \, \bigl ( \mathfrak{h}_c +q \, \mathfrak{h}_q +p \, \mathfrak{h}_p \bigr ) \cdot n 
		,
	\end{align*}
	where $\mathfrak{h}_c , \mathfrak{h}_q , \mathfrak{h}_p \in \R^3$. Then the flows $\Phi^{\eps}$ and $\Phi^0$ exist globally in time and the two are $\order(\eps)$-close, 
	\begin{align*}
		\sup_{(z,n) \in \Sigma} \babs{Z^{\eps}(t,z,n) - Z^0(t,z))} &= \order(\eps \abs{t})
		, \\ 
		\sup_{(z,n) \in \Sigma} \babs{N^{\eps}(t,z,n)- N^0(t,z,n)} &= \order(\eps \abs{t}^2) 
		. 
	\end{align*}
	We abbreviate $Z = Z^{\eps}$ and $N = N^{\eps}$ and introduce the norms 
	\begin{align*}
		\snorm{X}_t := \sup_{s \in [-t,t]} \sabs{X(s)} 
		.
	\end{align*}
	Let $'$ denote a derivative with respect to either $q$ or $p$. Then there are constants $b , g > 0$ depending only on $h_1$ such that for any $0 < \alpha < 1$ and all $\abs{t} \leq \eps^{- \nicefrac{1}{2}} \, \sqrt{\frac{\alpha}{bg}}$ it holds that
	\begin{align*}
		\bnorm{Z'}_t &\leq \frac{1}{1-\alpha} 
		, &&
		\bnorm{N'}_t \leq \frac{b |t|}{1-\alpha} 
		, 
		\\
		\bnorm{Z''}_t &\leq \frac{\alpha b |t|}{(1-\alpha)^3}
		, &&
		\bnorm{N''}_t \leq \frac{b^2 |t|^2}{(1-\alpha)^3} 
		, 
		\\
		\bnorm{Z'''}_t &\leq \frac{4 \alpha b^2 |t|^2}{(1-\alpha)^5}
		, &&
		\bnorm{N'''}_t \leq \frac{4 b^3 |t|^3}{(1-\alpha)^5} 
	\end{align*}
	and
	\begin{align*}
		\bnorm{Z' -Z^{0'}}_t \leq \frac{\eps \, b g |t|^2}{1-\alpha} 
		. 
	\end{align*}
\end{proposition}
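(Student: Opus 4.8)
The plan is to reduce all six derivative bounds to Gr\"onwall-type differential inequalities, exploiting two structural features special to this class of Hamiltonians. First, with $\mathfrak{h}_0=0$ and $h_1$ affine in $z=(q,p)$, the translational equations read $\dot z = \Lambda z + \eps\,G(n)$, where $\Lambda$ is the constant coefficient matrix of Hamilton's equations for the oscillator $h_0$ and $G(n)=\sqrt{3}\,(\mathfrak{h}_p\cdot n,\,-\mathfrak{h}_q\cdot n)$ is \emph{linear in $n$ and independent of $z$}; since the frequency $\omega$ is nonzero, $\E^{t\Lambda}$ is conjugate to a rotation, so $\snorm{\E^{t\Lambda}}$ is bounded uniformly in $t$, and after a linear symplectic rescaling one may take this bound to be $1$. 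Second, the spin equation is $\dot n = \mathfrak{H}(z)\,n$ with $\mathfrak{H}(z)=\mathfrak{H}_c+q\,\mathfrak{H}_q+p\,\mathfrak{H}_p$ both affine in $z$ and \emph{skew symmetric}. Global existence and smoothness of $\Phi^{\eps},\Phi^0$ then follow from standard ODE theory, using that on the invariant set $\R^2\times\Stwo$ the vector field grows at most linearly. For the $\order(\eps)$-closeness, Duhamel applied to the $z$-equation gives $Z^{\eps}(t)-Z^0(t)=\eps\int_0^t\E^{(t-s)\Lambda}\,G\bigl(N^{\eps}(s)\bigr)\,\D s$, and $\sabs{G(n)}\leq C$ yields $\snorm{Z^{\eps}(t)-Z^0(t)}=\order(\eps\abs{t})$ uniformly in $(z,n)$; then $\tfrac{\D}{\D t}\sabs{N^{\eps}-N^0}^2 = 2(N^{\eps}-N^0)\cdot\mathfrak{H}(Z^{\eps})(N^{\eps}-N^0) + 2(N^{\eps}-N^0)\cdot\bigl(\mathfrak{H}(Z^{\eps})-\mathfrak{H}(Z^0)\bigr)N^0$, whose first term vanishes by skew symmetry and whose second is $\order(\eps\abs{t})$, giving $\sabs{N^{\eps}(t)-N^0(t)}=\order(\eps\abs{t}^2)$.

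For the derivative bounds I would differentiate the equations of motion with respect to $q$ or $p$ (written $'$) and record the simplifications. Because the $z$-equation is affine linear, $\dot Z^{(k)}=\Lambda Z^{(k)}+\eps\,\widetilde G\,N^{(k)}$ for $k=1,2,3$ with the \emph{same} homogeneous part and a fixed matrix $\widetilde G=\partial_n G$. Because $\mathfrak{H}$ is affine in $z$, all second and higher $z$-derivatives of $z\mapsto\mathfrak{H}(z)$ vanish, so $\dot N' = \mathfrak{H}(Z)N' + (\partial_z\mathfrak{H}\cdot Z')N$, $\dot N'' = \mathfrak{H}(Z)N'' + 2(\partial_z\mathfrak{H}\cdot Z')N' + (\partial_z\mathfrak{H}\cdot Z'')N$, and similarly for $N'''$, where $\sabs{\partial_z\mathfrak{H}\cdot Z'}\leq C\sabs{Z'}$. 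The decisive point is that in the energy identity $\tfrac{\D}{\D t}\sabs{N^{(k)}}^2 = 2\,N^{(k)}\cdot\dot N^{(k)}$ the only potentially dangerous term, $2\,N^{(k)}\cdot\mathfrak{H}(Z)N^{(k)}$, vanishes identically because $\mathfrak{H}(Z)$ is skew symmetric; hence $\tfrac{\D}{\D t}\sabs{N^{(k)}}$ is controlled by the lower-order inhomogeneities alone, which involve $\sabs{Z'},\ldots,\sabs{Z^{(k)}}$ and $\sabs{N'},\ldots,\sabs{N^{(k-1)}}$ but \emph{never} $\mathfrak{H}(Z)$ itself. This is exactly why the bounds stay uniform in $(z,n)$ although $\mathfrak{H}(Z(t))$ is unbounded as $\sabs{z}\to\infty$.

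The estimates then close by a bootstrap in $k$: at each level one uses Duhamel for $Z^{(k)}$ (boundedness of $\E^{t\Lambda}$; initial data $\sabs{Z'(0)}\leq 1$, $\sabs{Z''(0)}=\sabs{Z'''(0)}=0$, and $N^{(k)}(0)=0$) together with the integrated energy inequality for $N^{(k)}$. At the first level, $\snorm{N'}_t\leq C\int_0^{\abs{t}}\snorm{Z'}_s\,\D s\leq C\abs{t}\,\snorm{Z'}_t$, and feeding this back, $\snorm{Z'}_t\leq 1+C\eps\abs{t}^2\,\snorm{Z'}_t$; this self-consistent inequality gives $\snorm{Z'}_t\leq(1-\alpha)^{-1}$ whenever $C\eps\abs{t}^2\leq\alpha$, that is for $\abs{t}\leq\eps^{-\nicefrac{1}{2}}\sqrt{\alpha/(bg)}$ with $b,g$ fixed multiples of $C$ determined by $h_1$ and $\omega$ --- exactly the stated time scale. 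Inserting the bounds for $Z',N'$ into the $N''$-inequality produces $\snorm{N''}_t$, which fed into Duhamel for $Z''$ gives $\snorm{Z''}_t$, and so on up to $Z''',N'''$; the geometric factors $(1-\alpha)^{-3},(1-\alpha)^{-5}$ and the numerical constant $4$ are just the bookkeeping of this recursion. Finally $\snorm{Z'-Z^{0'}}_t$ is immediate from $Z^{0'}(t)=\E^{t\Lambda}Z^{0'}(0)$ and $Z'(t)-Z^{0'}(t)=\eps\int_0^t\E^{(t-s)\Lambda}\,\widetilde G\,N'(s)\,\D s$, giving $\snorm{Z'-Z^{0'}}_t\leq C\eps\abs{t}\,\snorm{N'}_t\leq\eps\,bg\,\abs{t}^2(1-\alpha)^{-1}$. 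The main obstacle is arranging the first bootstrap so that the Gr\"onwall constant stays below $1$ up to time of order $\eps^{-\nicefrac{1}{2}}$: this works only because, thanks to the affine-linear structure and the skew symmetry, every feedback term grows at most \emph{polynomially} in $t$, the bad contribution being $\order(\eps\abs{t}^2)$, rather than exponentially, which would confine the argument to the Ehrenfest time $\order(\sabs{\ln\eps})$.
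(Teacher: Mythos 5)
Your proposal follows essentially the same route as the paper's proof: variation of constants with the uniformly bounded harmonic-oscillator propagator for the affine translational equations, exploitation of the skew symmetry of $\mathfrak{H}(z)$ for the spin equations (the paper via the orthogonal propagators $A_0(t),R_N(t)\in SO(3)$ and Duhamel, you via the equivalent energy identity $\tfrac{\D}{\D t}\sabs{N^{(k)}}^2=2N^{(k)}\cdot\dot N^{(k)}$ with the skew term dropping out), followed by the identical self-consistent bootstrap $\snorm{Z'}_t\leq 1+\eps\,gb\,\abs{t}^2\snorm{Z'}_t$ that closes precisely on the time scale $\abs{t}\leq\eps^{-\nicefrac{1}{2}}\sqrt{\alpha/(bg)}$, then cascades to the second and third derivatives and to $\snorm{Z'-Z^{0'}}_t$. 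The remaining differences are only bookkeeping of Leibniz coefficients and explicit constants, which the paper carries out in detail.
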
 
\begin{proof}
 To write the equations of motion in a concise form, we abbreviate
\begin{align*}
	\Omega = \left (
	\begin{matrix}
		0 & 1 \\
		- \omega^2 & 0 \\
	\end{matrix}
	\right )
	&&
	\mbox{and} 
	&&
	G = \sqrt{3} \, \left (
	\begin{matrix}
		+ \mathfrak{h}_p^T \\
		- \mathfrak{h}_q^T \\
	\end{matrix}
	\right ) 
	. 
\end{align*}
Then
\begin{align}
    \dot{Z} = \Omega Z +\epsi G N 
	\, ,
	\qquad
    \dot{N} = \mathfrak{H}(Z) N = \mathfrak{H}(Z^0) N + \mathfrak{H}(Z-Z^0) N 
	\, .
	\label{harmHEM}
\end{align}
Let $R_Z(t)\in SO(2)$ be the propagator of the harmonic oscillator satisfying $\dot{R}_Z(t) = \Omega R_Z(t)$ with $R_Z(0)= \id_{\R^2}$, then the ``variation of constants'' formula gives
\begin{align*}
	Z (t) = R_Z(t) \left( Z (0) + \eps \, \int_0^t \dd s \, R_Z(-s) \, G \, N (s) \right ) 
	= Z^0 (t) + \eps \int_0^t \dd s \, R_Z(t-s) \, G \, N (s) 
	.
\end{align*}
Thus $\sup_{(q,p,n)\in\Sigma } \babs{Z (t,q,p,n)-Z^0(t,q,p)} = \Or( \epsi |t|)$. 
For $N$ we find analogously with $A_0(t)\in SO(3)$ the propagator of the ``homogenous'' system
\begin{align*}
	\dot{A}_0(t) = \mathfrak{H} \bigl ( Z^0(t) \bigr ) A_0(t)
\end{align*}
that 
\begin{align*}
	N(t) &= A_0(t) \left ( N (0) + \int_0^t \dd s \, A_0(-s) \, \mathfrak{H} \bigl ( Z(s)-Z^0(s) \bigr ) \, N (s) \right ) 
	\\
	&= N^0 (t) + \int_0^t \dd s \, A_0(t-s) \, \mathfrak{H} \bigl ( Z(s)-Z^0(s) \bigr ) \, N (s) 
\end{align*}
and thus $\sup_{(q,p,n)\in\Sigma } \babs{N (t,q,p,n)-N^0(t,q,p,n)} = \order(\eps |t|^2)$.

We now turn to the derivatives of the flow. Let $'$ be a derivative with respect to the initial $q$ or $p$. Then the derivatives of the flow map satisfy the equations
\begin{align*}
	\dot{Z}' = \Omega Z' + \eps G N'
	\qquad
	\dot{N}' = \mathfrak{H}(Z) N' + B(Z') N\,,
\end{align*}
where $B(Z')$ is a skew-symmetric matrix with all components linear in $Q'$ and $P'$. Let $R_Z(t)\in SO(2)$ be as before and $R_N(t)\in SO(3)$ be the propagator of the ``homogenous'' system
\begin{align*}
	\dot{R}_N(t) = \mathfrak{H} \bigl ( Z(t) \bigr ) R_N(t) 
	,
\end{align*}
then the variation of constants formula and $\Phi^{\eps}_0(z,n) = (z,n)$ gives
\begin{align*}
	Z'(t) &= R_Z(t) \left ( Z'(0) + \epsi \int_0^t \dd s \, R_Z(-s) \, G \, N'(s) \right )
	\\
	N'(t) &= R_N(t) \biggl ( \underbrace{N'(0)}_{=0} + \int_0^t \dd s \, R_N(-s) \, B \bigl ( Z'(s) \bigr ) \, N(s) \biggr ) 
	\\
	&
	= \int_0^t \dd s \, R_N(t-s) \, B \bigl ( Z'(s) \bigr ) \, N(s) 
	.
\end{align*}
From this we infer that 
\begin{align*}
   \bnorm{Z'}_t &\leq \underbrace{\babs{Z'(0)}}_{=1} + \eps \, g |t| \, \bnorm{N'}_t 
	= 1 + \eps \, g |t| \, \bnorm{N'}_t 
    \\
    \bnorm{N'}_t &\leq b |t| \, \bnorm{Z'}_t 
\end{align*}
where $g=\|G\|_{\mathcal{B}(\R^3,\R^2)}$ and $b$ is a constant depending on $B$. Combining these estimates we find that for any $\alpha<1$
\begin{align*}
	\bnorm{Z'}_t \leq \frac{1}{1 - \eps \, b g |t|^2} 
	\leq \frac{1}{1-\alpha} 
	\quad\mbox{and}\quad 
	\bnorm{N'}_t \leq \frac{b |t|}{1 - \eps \, b g |t|^2} 
	\leq \frac{b |t|}{1-\alpha} \,
\end{align*}
uniformly for $|t| \leq \sqrt{\frac{\alpha}{\eps gb}}$. 
Similarly, from 
\begin{align*}
	\dot{Z}' - \dot{Z}^{0'} = \eps G N' 
	, 
\end{align*}
and the norm estimate for $N'$, we deduce 
\begin{align*}
	\bnorm{Z' - Z^{0'}}_t \leq \eps \, g \abs{t} \, \bnorm{N'}_t 
	\leq \frac{\eps \, b g \, \abs{t}^2}{1 - \alpha}
	. 
\end{align*}
Bounds on higher derivatives are obtained in the same way: 
\begin{align*}
	\dot{Z}'' &= \Omega Z'' +\epsi G N''
	\\
	\dot{N}'' &= \mathfrak{H}(Z) N'' + B(Z'') N + B(Z') N' 
\end{align*}
implies
\begin{align*}
	\bnorm{Z''}_t &\leq \babs{Z''(0)} + \eps \, g |t| \, \bnorm{N''}_t 
	= \eps \, g |t| \, \bnorm{N''}_t
	\\
	\bnorm{N''}_t &\leq b |t| \, \bigl ( \bnorm{Z''}_t + \bnorm{Z'}_t \, \bnorm{N'}_t \bigr ) 
	\leq b |t| \, \left ( \bnorm{Z''}_t + \frac{b \, |t|}{(1-\alpha)^2} \right )
\end{align*}
and thus 
\begin{align*}
	\bnorm{N''}_t \leq \eps \, b g |t|^2 \, \bnorm{N''}_t + \frac{b^2 \, |t|^2}{(1-\alpha)^2} \,.
\end{align*}
Solving for $\bnorm{N''}_t$ gives
\begin{align*}
	\bnorm{N''}_t \leq \frac{1}{1 - \eps \, b g |t|^2} \; \frac{b^2 |t|^2}{(1-\alpha)^2} 
	\leq \frac{b^2 |t|^2}{(1-\alpha)^3}
\end{align*}
and 
\begin{align*}
	\bnorm{Z''}_t \leq \frac{\eps \, b^2 g |t|^3}{(1-\alpha)^3} \leq \frac{\alpha b |t|}{ (1-\alpha)^3} 
	.
\end{align*}
For the third derivatives we find 
\begin{align*}
	\dot{Z}''' &= \Omega Z''' + \eps G N'''
	\\
	\dot{N}''' &= \mathfrak{H}(Z) N''' + B(Z''') N + 2 B(Z') N'' + 2 B(Z'') N' 
\end{align*}
and thus 
\begin{align*}
	\bnorm{Z'''}_t &\leq \eps \, g |t| \, \bnorm{N'''}_t
	\\
	\bnorm{N'''}_t &\leq b |t| \, \bigl ( \bnorm{Z'''}_t + 2 \, \bnorm{Z'}_t \, \bnorm{N''}_t + 2 \, \bnorm{Z''}_t \, \bnorm{N'}_t \bigr ) 
	\\
	&\leq b |t| \, \left ( \bnorm{Z'''}_t + \frac{2 b^2 |t|^2}{(1-\alpha)^4} + \frac{2 \eps \, b^3 g |t|^4}{(1-\alpha)^4} \right)
	\\
	&\leq 
	\eps \, b g |t|^2 \, \bnorm{N'''}_t + \frac{2 b^3 |t|^3 \, (1 + \eps \, b g |t|^2)}{ (1-\alpha)^4} 
	\\
	&\leq \eps \, b g |t|^2 \, \bnorm{N'''}_t + \frac{4 b^3 |t|^3}{(1-\alpha)^4} 
\end{align*}
where in the last step we have used $0 < \alpha < 1$ and $\abs{t} \leq \eps^{- \nicefrac{1}{2}} \, \sqrt{\frac{\alpha}{b g}}$. Solving again for $N'''$ we find
\begin{align*}
	\bnorm{N'''}_t \leq \frac{4 b^3 |t|^3}{(1-\alpha)^5} 
	\quad \mbox{and} \quad 
	\bnorm{Z'''}_t \leq \frac{4 \eps \, b^3 g |t|^4}{(1-\alpha)^5} 
	\leq \frac{4 \alpha b^2 |t|^2}{(1-\alpha)^5}
	.
\end{align*}
\end{proof}

\section{The Stern-Gerlach experiment} 
\label{Stern-Gerlach}
As an illustration of the method we discuss the Stern-Gerlach experiment:
Neutral atoms with magnetic moment $g$ and spin-$\nicefrac{1}{2}$ are sent through a weak, inhomogeneous magnetic field $\mathbf{B} = \bigl ( \mathbf{B}_1 , \mathbf{B}_2 , \mathbf{B}_3 \bigr ) \in \Cont^{\infty}_{\mathrm{b}}(\R^3,\R^3)$. For simplicity, we will absorb $g$ into $\mathbf{B}$. 
In the experiment one observes that a beam of such particles splits into two parts with intensities depending on the initial spin-state.
The Hamiltonian describing a single atom in the beam is the Weyl quantization of 
\begin{align}
	H(q,p) &= \tfrac{1}{2} p^2 - \eps \, \tfrac{1}{2} \mathbf{B}(q) \cdot \sigma 
	\\
	&
	= \Opspin \Bigl ( \tfrac{1}{2} p^2 - \eps \, \tfrac{\sqrt{3}}{2} \mathbf{B}(q) \cdot n \Bigr ) 
	=: \Opspin \bigl ( h(q,p,\cdot) \bigr ) 
	. 
	\notag 
\end{align}
Due to the assumption on the magnetic field, Theorem~\ref{theorem:order_1_Egorov} applies and for observables $a \in S^0_{\Sigma}$ which are initially independent of spin, we have a semiclassical limit with error $\order(\eps^2)$. In particular, this implies we can compute quantum expectation values 
\begin{align*}
	\mathrm{Tr}_{L^2(\R^3,\C^2)} &\Bigl ( \e^{+\ii \frac{t}{\eps} \hat{h}} \, \Opsum(a) \, \e^{- \ii \frac{t}{\eps} \hat{h}} \, \hat{w} \Bigr ) 
	= \\
	&
	= \frac{1}{(2\pi)^3} \int_{\R^6} \dd q \, \dd p \, \frac{1}{4\pi} \int_{\Stwo} \dd n \, a \circ \Phi^{\eps}_t(q,p,n) \, w(q,p,n) + \order(\eps^2)
\end{align*}
with respect to the state $\hat{w} = \Opsum(w)$ for times of order $1$. 

To be able to solve the semiclassical equations of motion \eqref{Egorov:eqn:hamiltonian_eom}  {analytically}, we will make some simplifying assumptions: first of all, we take the magnetic field to be of the form 
\begin{align*}
	\mathbf{B}(q) = \bigl ( 0 , 0 , b(q_1) \bigr ) 
	= b(q_1) \, e_3 
\end{align*}
where $b \in \Cont^{\infty}_{\mathrm{c}}(\R)$, and thus $\nabla_q \cdot \mathbf{B} = 0$, $\nabla_q \mathbf{B}_j = 0$ for $j = 1,2$ and $\nabla_q \mathbf{B}_3 = b' \, e_1$. It is easy to solve the equations of motion 
\begin{align}
	\dot{q} &= p 
	\notag \\
	\dot{p} &=   \eps \, \tfrac{\sqrt{3}}{2} \, \nabla_q \bigl ( \mathbf{B} \cdot n \bigr ) 
	=   \eps \, \tfrac{\sqrt{3}}{2} \, b' \, n_3 \, e_1 
	= \order(\eps) 
	\label{Stern-Gerlach:eqn:eom}
	\\
	\dot{n} &= b \, e_3 \wedge n 
	\notag 
\end{align}
explicitly up to $\order(\eps^2)$: the leading-order flow is given by 
\begin{align*}
	\Phi^0_t(q,p,n) = \left (
	\begin{matrix}
		q + t \, p \\
		p \\
		N^0(t,q,p,n) \\
	\end{matrix}
	\right )
\end{align*}
where $N^0(t,q,p,n)$ solves 
\begin{align*}
	\dot{N}^0(t,q,p,n) &= b(q_1 + t \, p_1) \, e_3 \wedge N^0(t,q,p,n) 
	, 
	\qquad \qquad 
	N^0(0,q,p,n) = n 
	. 
\end{align*}
Note that the spin precesses around the $e_3$-axis and thus $N_3^0(t,q,p,n) = n_3$. The flow which solves \eqref{Stern-Gerlach:eqn:eom} can be found up to $\order(\eps^2)$ by iteration: $\ddot{q} = \epsi  \frac{\sqrt{3}}{2} b' (q_1)\, n_3 \, e_1$ and hence, for initial momenta $p = (0,p_2,p_3)$ in the $p_2 p_3$-plane, we compute 
\begin{align*}
	q^{\eps}(t,q,p,n) &= q + t \, p +\eps \, t^2 \, \tfrac{\sqrt{3}}{2} \, b'(q_1) \, n_3 \, e_1 
	. 
\end{align*}
Thus, classical trajectories starting at $(q,p)$ split into a whole fan of possible directions depending on  $n_3 \in [-1,+1]$. This apparent inconsistency with the quantum mechanical predictions disappears after averaging over an initial distribution of spin. Let us assume for simplicity that the initial state is a product state, $\hat{w} = \hat{w}_{\R^6} \otimes \hat{w}_{\Stwo}$, with symbol $w(q,p,n) = w_{\R^6}(q,p) \, w_{\Stwo}(n)$. The fact that $w_{\Stwo}(n) = \frac{1}{2}\left(\mathfrak{s}_0 + \sqrt{3} \, \mathfrak{s} \cdot n\right)$ is a density matrix, \ie 
\begin{align*}
	\Opspin(w_{\Stwo}) &= \Opspin(w_{\Stwo})^* 
	, 
	&&
	0 \leq \Opspin(w_{\Stwo}) \leq 1 
	, 
	&&
	\mathrm{Tr}_{\C^2} \bigl ( \Opspin(w_{\Stwo}) \bigr ) = 1 
	, 
\end{align*}
implies $\mathfrak{s}_0 = 1$, $\mathfrak{s} \in \R^3$ and $\abs{\mathfrak{s}} \leq 1$. The pure states have ``spin-direction'' $\mathfrak{s}\in \Stwo$, while the completely unpolarized state is $w_{\Stwo}^0(n) = \frac{1}{2}$.
Taking the spin-average of the $e_1$-deflection  
\[
\Delta q^{\eps}_1(t) := q^\eps_1(t) - q_1 = \eps \, t^2 \, \tfrac{\sqrt{3}}{2} \, b'(q_1) \, n_3  
\]
for initial momenta $p = (0,p_2,p_3)$ with respect to $\Opspin(w_{\Stwo}) = \tfrac{1}{2} \left( \id_{\C^2} + \mathfrak{s} \cdot \sigma\right)$ yields 
\begin{align}
	\mathbb{E}_{w_{\Stwo}} \bigl ( \Delta q^{\eps}_1(t) \bigr ) :& \negmedspace= 
	\mathrm{Tr}_{\C^2} \Bigl ( \Opspin \bigl ( \Delta q^{\eps}_1(t) \bigr ) \, \Opspin(w_{\Stwo}) \Bigr ) 
	\notag \\
	&=    \eps \, t^2 \,  \frac{b'(q_1)}{2}    \, \mathrm{Tr}_{\C^2} \bigl ( \sigma_3 \, \Opspin(w_{\Stwo}) \bigr )  
	\notag \\
	&=     \eps \, t^2 \, \frac{ b'(q_1) }{2}  \, \mathfrak{s}_3\,   
	. 
	\label{Stern-Gerlach:eqn:averaged_position}
\end{align}
While this is in line with the quantum mechanical predictions, in order to rederive the fact that the distribution of $ \Delta q^{\eps}_1(t)$ is concentrated on two points, we need to compute also higher moments of $ \Delta q^{\eps}_1(t)$. To this end note that
\[
\left(  \Opspin \bigl ( \Delta q^{\eps}_1(t) \bigr )\right)^{2m} =  \left( \eps \, t^2 \,  \, \frac{|b'(q_1)|}{2}\right)^{2m}
\sigma_3^{2m} =  \left( \eps \, t^2 \,  \, \frac{|b'(q_1)|}{2}\right)^{2m} \id_{\C^2}
\]
and 
\[
\left(  \Opspin \bigl ( \Delta q^{\eps}_1(t) \bigr )\right)^{2m+1} =  \left( \eps \, t^2 \,  \, \frac{|b'(q_1)|}{2}\right)^{2m+1}
\sigma_3^{2m+1} =  \left( \eps \, t^2 \,  \, \frac{|b'(q_1)|}{2}\right)^{2m+1} \sigma_3\,.
\]
Hence
\[
	\mathbb{E}_{w_{\Stwo}} \bigl ( (\Delta q^{\eps}_1(t))^{2m} \bigr )  = 
	   \left( \eps \, t^2 \,  \, \frac{|b'(q_1)|}{2}\right)^{2m}
\]
and
\[
  \mathbb{E}_{w_{\Stwo}} \bigl ( (\Delta q^{\eps}_1(t))^{2m+1} \bigr )  = 
	   \left( \eps \, t^2 \,  \, \frac{|b'(q_1)|}{2}\right)^{2m+1}
	    \, \mathfrak{s}_3\,.
\]
In conclusion, we have that the distribution for the deflection is concentrated on the points $\pm  \eps \, t^2 \, \frac{ b'(q_1) }{2} $ with weights $\frac{1}{2}(1 \pm \mathfrak{s}_3)$, which is exactly the quantum mechanical prediction.

Of course we still need to average over an initial distribution $w_{\R^6}(q,p)$ of position and momenta. According to the uncertainty relation we can assume that the width of the initial position distribution is $\Delta q_0 =\Or( \epsi^\frac{1}{3})$ and  the width of the initial momentum distribution is $\Delta p_0 =\Or(\epsi^\frac{2}{3})$. After time $t \propto \epsi^{-\frac12}$ the  width $\Delta q = \Delta q_0 + t \Delta p_0  =\Or(  \epsi^\frac{1}{6})$ of wave packets following the classical trajectories is smaller than the separation
$\eps \, t^2 \,   b'(q_1)  =\Or(1)$
 of the trajctories due to the deflection by the inhomogeneous field. 

In conclusion, we see that the semiclassical model can correctly reproduce the splitting of wave packets in inhomogeneous fields when starting in   quantum mechanical states. For $b(q_1) = q_1$ we even have  a rigorous proof that the semiclassical predictions agree at leading order with the quantum mechnical distributions up to times where the splitting is visible.

\bibliographystyle{alpha}
\bibliography{bibliography}

\newcommand{\etalchar}[1]{$^{#1}$}
\begin{thebibliography}{CSMH{\etalchar{+}}10}

\bibitem[BG04]{Bolte_Glaser:semiclassical_limit_matrix-valued_operators:2004}
Jens Bolte and Rainer Glaser.
\newblock {A semiclassical Egorov theorem and quantum ergodicity for matrix
  valued operators}.
\newblock {\em Communications in Mathematical Physics}, 247:391--419, 2004.

\bibitem[BGK01]{Bolte_Glaser_Keppeler:ergodicity_spinning_particle:2001}
Jens Bolte, Rainer Glaser, and Stefan Keppeler.
\newblock {Quantum and classical ergodicity of spinning particles}.
\newblock {\em Annals of Physics}, 293:1--14, 2001.

\bibitem[BGP99]{Bambusi_Graffi_Paul:semiclassics_Ehrenfest:1999}
Dario Bambusi, Sandro Graffi, and Thierry Paul.
\newblock {Long time semiclassical approximation of quantum flows: A proof of
  the Ehrenfest time}.
\newblock {\em Asymptotic Analysis}, 21(2):149--160, 1999.

\bibitem[BK99]{Bolte_Keppeler:semiclassics_Dirac:1999}
Jens Bolte and Stefan Keppeler.
\newblock {A semiclassical approach to the Dirac equation}.
\newblock {\em Annals of Physics}, 274:125--162, 1999.

\bibitem[BR02]{Bouzouina_Robert:uniform_semiclassical_estimates_Ehrenfest:2002}
Abdelkader Bouzouina and Didier Robert.
\newblock {Uniform semiclassical estimates for the propagation of quantum
  observables.}
\newblock {\em Duke Math Journal}, 111(2):223--252, 2002.

\bibitem[CSMH{\etalchar{+}}10]{Chuchem_Smith-Mannschott_Hiller_Kottos_Vardi_Co%
hen:quantum_dynamics_josephson_junction:2010}
Maya Chuchem, Katrina Smith-Mannschott, Moritz Hiller, Tsampikos Kottos,
  Amichay Vardi, and Doron Cohen.
\newblock {Quantum dynamics in the bosonic Josephson junction}.
\newblock {\em Phys. Rev. A}, 82:053617, 2010.

\bibitem[EW96]{Emmrich:1996vv}
C.\ Emmrich and A.\ Weinstein.
\newblock {Geometry of the transport equation in multicomponent WKB
  approximations}.
\newblock {\em Communications in Mathematical Physics}, 176:701--711, 1996.

\bibitem[FKGL13]{Fermanian-Kammerer_Gerard_Lasser:Wigner_measure_propagation:2%
012}
Clotilde Fermanian-Kammerer, Patrick G\'erard, and Caroline Lasser.
\newblock Wigner measure propagation and conical singularity for general
  initial data.
\newblock {\em Archive for Rational Mechanics and Analysis}, 209(1):209--236,
  2013.

\bibitem[Fol89]{Folland:harmonic_analysis_hase_space:1989}
Gerald~B. Folland.
\newblock {\em {Harmonic Analysis on Phase Space}}.
\newblock Princeton University Press, 1989.

\bibitem[GL93]{Gerard_Leichtnam:ergodic_properties_eigenfunctions:1993}
Patrick Gérard and Eric Leichtnam.
\newblock {Ergodic properties of eigenfunctions for the Dirichlet problem}.
\newblock {\em Duke Math. Journal}, 71(2):559--607, 1993.

\bibitem[GLT13]{Gat:2013vm}
Omri Gat, Max Lein, and Stefan Teufel.
\newblock {Resonance phenomena in the interaction of a many-photon wave packet
  and a qubit}.
\newblock {\em J.\ Phys.\ A: Math.\ Theor.}, 46:315301, 2013.

\bibitem[Kg81]{Kumanogo:pseudodiff:1981}
Hitoshi Kumano-go.
\newblock {\em {Pseudodifferential Operators}}.
\newblock The MIT Press, 1981.

\bibitem[Lei10]{Lein:quantization_semiclassics:2010}
Max Lein.
\newblock {Weyl Quantization and Semiclassics}, 2010.
\newblock Technische Universität München.

\bibitem[LF92]{LITTLEJOHN:1992vg}
Robert~G.\ Littlejohn and William~G.\ Flynn.
\newblock {Semiclassical Theory of Spin-Orbit-Coupling}.
\newblock {\em Physical Review A}, 45(11):7697--7717, 1992.

\bibitem[Mar02]{Martinez:intro_microlocal_analysis:2002}
Andre Martinez.
\newblock {\em {An Introduction to Semiclassical and Microlocal Analysis}}.
\newblock Springer Verlag, 2002.

\bibitem[NS04]{Nenciu:2004ex}
Gheorghe Nenciu and Vania Sordoni.
\newblock {Semiclassical limit for multistate Klein--Gordon systems: almost
  invariant subspaces, and scattering theory}.
\newblock {\em Journal of Mathematical Physics}, 45(9):3676, 2004.

\bibitem[PST03]{PST:sapt:2002}
Gianluca Panati, Herbert Spohn, and Stefan Teufel.
\newblock {Space Adiabatic Perturbation Theory}.
\newblock {\em Adv. Theor. Math. Phys.}, 7:145--204, 2003.

\bibitem[Rob87]{Robert:tour_semiclassique:1987}
Didier Robert.
\newblock {\em {Autour de l'Approximation Semi-Classique}}.
\newblock Birkhäuser, 1987.

\bibitem[ST13]{Stiepan_Teufel:semiclassics_op_valued_symbols:2012}
Hans-Michael Stiepan and Stefan Teufel.
\newblock {Semiclassical approximations for Hamiltonians with operator-valued
  symbols}.
\newblock {\em Communications in Mathematical Physics}, 320:821--849, 2013.

\bibitem[Str57]{Stratonovich:distributions_rep_space:1957}
R.~L. Stratonovich.
\newblock {On distributions in representation space}.
\newblock {\em Sov. Phys. JETP}, 4:891, 1957.

\bibitem[Tay81]{Taylor:PsiDO:1981}
Michael~E. Taylor.
\newblock {\em Pseudodifferential Operators}.
\newblock Princeton University Press, 1981.

\bibitem[Teu03]{Teufel:adiabatic_perturbation_theory:2003}
Stefan Teufel.
\newblock {\em {Adiabatic Perturbation Theory in Quantum Dynamics}}.
\newblock Springer Verlag, 2003.

\bibitem[VGB89]{Varilly_Gracia_Bondia:Moyal_rep_spin:1989}
Joseph~C. Vàrilly and José Gracia-Bondìa.
\newblock {Moyal Representation of Spin}.
\newblock {\em Annals of Physics}, 190:107--148, 1989.

\bibitem[VGBS90]{Varilly_Gracia_Bondia_Schempp:Moyal_rep_qm:1990}
Joseph~C. Vàrilly, José Gracia-Bondìa, and Walter Schempp.
\newblock {The Moyal Representation of Quantum Mechanics and Special Function
  Theory}.
\newblock {\em Acta Applicandae Mathematica}, 11:225--250, 1990.

\bibitem[ZZ96]{Zelditch:1996tj}
Steven Zelditch and Maciej Zworski.
\newblock {Ergodicity of eigenfunctions for ergodic billiards}.
\newblock {\em Communications in Mathematical Physics}, 175(3):673--682, 1996.

\end{thebibliography}

\end{document}